    \let\oldnl\nl
    \newcommand{\nonl}{\renewcommand{\nl}{\let\nl\oldnl}}
\renewcommand{\d}[1]{d_{#1}}
\newcommand{\f}[2]{\ifthenelse{\equal{#2}{}}{f_{#1}}{f_{#1}(#2)}}
\newcommand{\x}[2]{x_{#1,#2}}
\newcommand{\y}[1]{y_{#1}}
\renewcommand{\a}[2]{a_{#1,#2}}
\newcommand{\lp}{\textsc{lp(\ref{LP:blocks}-\ref{LP:y-nonneg})}}
\newcommand{\constraints}{(\ref{LP:blocks}-\ref{LP:y-nonneg})}
\newcommand{\cms}{\textsc{cms}}
\newcommand{\numcms}{Numerical-\textsc{cms}}
\newcommand{\opt}{\textsc{opt}}
\newcommand{\eps}{\varepsilon}
\newcommand{\set}[1]{\left\{ #1 \right\}}
\newcommand{\floor}[1]{\left\lfloor #1 \right\rfloor}
\newenvironment{heuristic}[1][]{\refstepcounter{algocf}\par\medskip
   \noindent \textbf{Algorithm~\arabic{algocf}.}\ \rmfamily}{\par\medskip}
\newcommand{\junk}[1]{}
\newcounter{tempcounter}
\begin{document}

\title{Scheduling Splittable Jobs on\\Configurable Machines}
\titlerunning{Configurable Machines}

\author{Matthew Casey \and Rajmohan Rajaraman \and David Stalfa \and Cheng Tan}

\authorrunning{M. Casey et al.}

\institute{Northeastern University, Boston MA 02115, USA \\ \email{\{casey.ma, r.rajaraman, stalfa.d, c.tan\}@northeastern.edu}}


\maketitle

\begin{abstract}
Motivated by deep neural network applications, we study the problem of scheduling splittable jobs (e.g., neural network inference tasks) on configurable machines (e.g., multi-instance GPUs). We are given $n$ jobs and a set $C$ of configurations (e.g, representing ways to configure a GPU) consisting of multisets of blocks (e.g., representing GPU instances). A schedule consists of a set of machines, each assigned some configuration in $C$ with each block in the configuration assigned to process one job. The amount of a job's demand that is satisfied by a given block is an arbitrary function of the job and block. The objective is to satisfy all demands on as few machines as possible. We provide a tight logarithmic approximation algorithm for this problem in the general setting, an asymptotic $(2 + \eps)$-approximation with $O(1)$ input configurations for arbitrary $\eps > 0$, and a polynomial time approximation scheme when both the number and size of configurations are $O(1)$. 

\junk{
We also provide constant approximation bounds in the setting where machines have a fixed size $k$ and any integer partition of $k$ is a valid configuration.}

\keywords{Scheduling Algorithms \and Approximation Algorithms \and Configurable Machines \and Splittable Jobs}
\end{abstract}

\section{Introduction}
\label{sec:intro} 
Deep neural network models, especially LLMs, are extremely resource intensive and require careful allocation of resources to maximize throughput at the time of inference.  Each DNN inference job either consists of a sequence of inference queries, or is a long-running request needing a certain throughput of inference queries. These jobs are typically assigned multiple GPUs, each running the same underlying model and processing inference query streams.  The performance of a DNN model (measured by the throughput they achieve or the latency they provide for inference tasks) does not always vary linearly with the resources provided; so, allocating a full GPU instance to a given DNN inference job may be wasteful in some scenarios.  Modern GPUs (e.g., Nvidia's A100) include a feature called Multi-Instance GPU, which enables a GPU to be configured into smaller isolated instances, each with their own processors, memory, and L2 cache. Recent work~\cite{tan2021serving} has argued that this configurability can yield much more cost-effective execution of DNN inference jobs by partitioning individual GPUs into smaller instances and allocating the DNN inference jobs to instances of appropriate size.

In this work, we initiate a systematic study of scheduling splittable jobs in configurable machines. We call this problem \emph{Configurable Machine Scheduling} or \cms.  We consider machines that can be configured into smaller instances, which we call \emph{blocks}, in multiple ways, each of which is referred to as a \emph{configuration}.  We consider jobs, each with a certain demand that needs to be satisfied by allocating blocks.  Each job has a table that specifies how much demand can be satisfied by a given block type.  The desired output of the problem is the number of machines of each configuration type and the number of blocks of each block type to allocate for each job, subject to two constraints: (i) the blocks allocated for each job ensure that the demand of the job is satisfied, and (ii) the blocks allocated for each block type match the number of blocks in the machine configurations.  We focus on the goal of minimizing the total number of machines.

\bigskip

\noindent \textbf{Configurable Machine Scheduling (\cms)}

\smallskip
\noindent We are given a set $J$ of $n$ \textit{jobs} and a set $B = \{1,2,\ldots,k\}$ of $k$ \textit{block types}. Each job $j$ has an associated \textit{demand} $\d j$ and \textit{demand table} $\f j{}$. For each element $i \in B$, the function $\f ji$ indicates how many units of $j$'s demand is satisfied by a block of type $i$. (We assume that $\max_i \{\f ji\} \le \d j$ and that $\min_{i:\f ji \neq 0}\{\f ji\} = 1$. The former can be achieved by reducing large values, and the latter by scaling all table values and demands, neither of which affects the optimal solution.)

A \textit{configuration} $\sigma$ is a multiset of blocks in $B$. A \textit{machine} $\mu$ is a mapping from the blocks of some configuration to jobs, and a schedule $S$ consists of a set of multiplicity-machine pairs $(a,\mu)$. 
For each job, the sum of demands satisfied by all blocks assigned to the job must be at least the job's demand, i.e. for each job $j$, $\sum_{(a,\mu) \in S} a \cdot  \sum_{i:\mu(i)=j} \f ji \ge \d j$. Our objective is to construct a schedule that minimizes the number of machines (i.e. minimizes $\sum_{(a,\mu) \in S} a$). 

\junk{We distinguish between two models: the combinatorial and the numerical. The models provide different restrictions on how machines can be configured. }

A problem instance is specified as a triple $(C,f,d)$ where $C$ is a set of allowable configurations, where each configuration $\sigma$ is multiset of elements in $B$. $f$ is an $n\times k$ matrix specifying the demand table for each job, and $d$ is the vector of their demands. 
\junk{\smallskip

\noindent \textsl{\numcms}.\
A problem instance is given as a pair $(f,d)$, where $f$ is an $n\times k$ matrix specifying the table for each job, and $d$ is the vector of their demands. Any multiset $\sigma$ of the elements of $B$ such that $\sum_{i \in \sigma} i \le k$ is a valid configuration. }



\bigskip

\noindent\textbf{Our results}
\smallskip

\noindent Our \cms\ problem formulation yields a rich landscape of optimization problems, which vary depending on the properties of block types, configurations, and the job demand tables.  In this paper, we focus on 
the general \cms\ problem and two special cases where the number of configurations is bounded.  We obtain near-tight approximation results (see Table~\ref{tab:bounds}) for the associated problems.

\junk{
\noindent We introduce two variants of configurable machine scheduling: (i) a general \emph{combinatorial} variant \cms\ where each configuration is an arbitrary multiset of block types, and (ii) a \emph{numerical} variant \numcms\ where each block type has an integer size and the blocks in each configuration have total size equal to a given fixed integer, the size of each machine. The combinatorial \cms\ generalizes multiset multicover~\cite{rajagopalan+v:cover}, while \numcms\ generalizes bin-packing~\cite{XXX}. The prevalence of interesting special cases of \cms\ yields a rich landscape of optimization problems, which we begin to explore.
We obtain near-tight approximation results (see Table~\ref{tab:bounds}) for 
the associated problems.}  
\smallskip

\noindent {\bfseries General \cms\ (Section~\ref{sec:LP+greedy}).}\ Using a reduction from minimum multiset multicover~\cite{rajagopalan+v:cover}, we observe that \cms\ is hard to approximate to better than a factor of $\Omega(\log nk)$, where $n$ is the number of jobs and $k$ the number of blocks.  We then present an $O(\log (cnk))$-approximation algorithm, where $n$ is the number of jobs, $k$ the number of blocks, and $c$ is the size of the largest configuration. Our algorithm constructs a schedule by greedily selecting the highest throughput configuration on the basis of a linear programming relaxation.
\smallskip

\noindent {\bf \cms\ with $O(1)$ configurations (Section~\ref{sec:cms_constant}).}\ Using a reduction from Partition, we observe that \cms, \emph{even with one configuration and two jobs}, is hard to approximate to better than a factor of 2. We present an algorithm that, for any instance of \cms\ with $O(1)$ configurations $C$ and arbitrary $\eps > 0$, uses at most $(2 + \eps)\opt + |C|$ machines where $\opt$ is the number of machines needed in the optimal solution. We also show that our algorithm always achieves a $3 + \eps$ approximation. Our algorithm builds on the seminal LP rounding technique of~\cite{LenstraShmoysTardos} and exploits the structure of extreme-point solutions to iteratively and carefully round the LP variables.   
\smallskip

\noindent {\bf \cms\ with $O(1)$ configurations of $O(1)$ size (Section~\ref{sec:ptas_constantk}).}\ We next consider combinatorial \cms\ with a constant number of configurations, each of constant size (i.e., having a constant number of blocks).  We show that the problem is solvable in pseudo-polynomial time; our main result here is a PTAS based on rounding a novel LP relaxation for the problem.  
\junk{\smallskip 

\noindent {\bf \numcms\ with all configurations allowed (Section~\ref{sec:numerical}).}\ Finally, we consider \numcms, which is NP-hard even when all configurations are allowed.  We present a $(2 + \eps)$-approximation algorithm, for any $\eps > 0$, that combines a greedy algorithm with a PTAS for unbounded min-knapsack.}



\begin{table}
\begin{center}
\begin{tabular}{|cccc|}
    \hline
    \rowcolor{lightgray} &&& \\[-8pt]
    \rowcolor{lightgray}Problem&Algorithm&\ Approximation\ &Hardness\\[2pt]
    \hline &&& \\[-10pt] 
    \hline &&&\\[-6pt]
    \cms & LP + Greedy & $O(\log cnk)$ & $\Omega(\log nk)$ \\[6pt]
    \hline  &&&\\[-8pt]
    {\begin{tabular}{@{}c@{}}\cms\\ \ $O(1)$ configurations \ \end{tabular}} & {\begin{tabular}{@{}c@{}}Extreme-Point\\ \ LP Rounding \ \end{tabular}} &  {\begin{tabular}{@{}c@{}} \ $(2+\varepsilon)\opt + |C|$ \ \\ $3+\varepsilon$ \end{tabular}} & $2$ \\[10pt]
    \hline  &&&\\[-8pt]
    {\begin{tabular}{@{}c@{}}\cms\\\ $O(1)$ configurations \\of $O(1)$ size\end{tabular}} & Small/Large Job LP &  $1 + \varepsilon$ & ? \\[15pt]
    \hline
\end{tabular}
\end{center}
\caption{Results for Configurable Machine Scheduling. $n$ is the number of jobs, $k$ is the number of block types, and $c = \max_{\sigma \in C}\{|\sigma|\}$ is the maximum size of any configuration.}
\label{tab:bounds}
\vspace{-7mm}
\end{table}%

\noindent\textbf{Related work}
\smallskip

\noindent Configurable machine scheduling has connections to many well-studied problems in combinatorial optimization, including bin-packing, knapsack, multiset multicover, and max-min fair allocation.  The general combinatorial \cms\ problem generalizes the multiset multicover problem~\cite{korte+v:book,hua+wyl:multicover,rajagopalan+v:cover}, for which the best approximation factor achievable in polynomial time is $O(\log m)$ where $m$ is the sum of the sizes of the multisets~\cite{rajagopalan+v:cover,Vazirani}.  The hardness of approximating the problem to within an $O(\log n)$ factor follows from the result for set cover~\cite{dinur+s:repetition}.

As we note above, combinatorial \cms\ is NP-complete even for the case of one configuration and two jobs.  The single configuration version can be viewed as a fair allocation problem with each block representing an item and each job representing a player that has a value for each item (given by the demand table) and a desired total demand.  The objective then is to minimize the maximum number of copies we need of each block so that they can be distributed among the players satisfying their demands.  In contrast, the Santa Claus problem in fair allocation~\cite{bansal+s:santa_claus} (also studied under a different name in algorithmic game theory~\cite{lipton+mms:fair}) aims to maximize the minimum demand that can be satisfied with the available set of blocks.  The best known approximation algorithm for the Santa Claus problem  is a quasi-polynomial time $O(n^{\eps})$-approximation,
where $\eps = O(\log\log n/ \log n)$~\cite{chakrabarty+ck:fair}, though $O(1)$ approximations are known for special cases (e.g., see~\cite{cheng+m:fair}).    

\junk{The \cms\ problem combines aspects of bin-packing, knapsack, and matching: as in knapsack, blocks are chosen based on the demand they satisfy; jobs are matched to blocks, which are matched to configurations; and \junk{(in the numerical case)} machines are treated as (abstract) bins into which blocks are packed.  

Nevertheless, the PTAS for knapsack and asymptotic PTAS for bin-packing do not extend to \numcms.  Our algorithm for \numcms\ 
achieves a $(2 + \eps)$-approximation by combining a greedy technique with an algorithm for unbounded knapsack~\cite{JianZhao.UnboundedKnapsack.2019}.  If further the input instance has only a constant number of job types (as is likely in practice), the techniques of Goemans-Rothvoss~\cite{GoemansRothvoss.BinPacking.13} can be used to yield a polynomial-time optimal solution.}
\bigskip

\noindent \textbf{Discussion and Open Problems} 
\smallskip

\noindent Our study has focused on a \emph{combinatorial} version of \cms\
in which each machine can be configured as a collection of abstract blocks.  It is also natural to consider a \emph{numerical} version of \cms\ in which each block type is an item of a certain size, and each configuration has a certain capacity and can only fit blocks whose sizes add up exactly to its capacity.  The approximation ratios established for \cms\ apply to numerical \cms\ as well; however it is not certain that there is also a logarithmic hardness for numerical \cms.  Thus, an intriguing open problem is whether numerical \cms\ admits an approximation factor significantly better than the logarithmic factor established in Section~\ref{sec:LP+greedy}. Also of interest is a numerical \cms\ variant where all capacity-bounded configurations are allowed, for which we believe techniques from unbounded knapsack and polytope structure results from bin-packing would be useful~\cite{JianZhao.UnboundedKnapsack.2019,GoemansRothvoss.BinPacking.13}.  

\junk{Despite the similarity to knapsack and bin-packing, the PTAS for knapsack and asymptotic PTAS for bin-packing do not extend to \numcms.  Our algorithm for \numcms\ 
achieves a $(2 + \eps)$-approximation by combining a greedy technique with an algorithm for unbounded knapsack~\cite{JianZhao.UnboundedKnapsack.2019}.  If further the input instance has only a constant number of job types (as is likely in practice), the techniques of Goemans-Rothvoss~\cite{GoemansRothvoss.BinPacking.13} can be used to yield a polynomial-time optimal solution.}

Our results indicate several directions for future research. One open problem is to devise approximation algorithms that leverage structure in the set of available configurations. In practice, the configuration sets associated with multi-instancing GPUs might not be arbitrary sets, e.g. the blocks of Nvidia's A100 GPU are structured as a tree and every valid configuration is a set of blocks with no ancestor-descendant relations~\cite{tan2021serving}. Showing improved bounds for such cases seems to be a challenging, but potentially fruitful area of research.


Another open problem lies in shrinking the gap between our upper and lower bounds. The hard instances for \cms\ with $O(1)$ configurations and \numcms\ have constant size solutions, showing e.g. that it is NP-hard to distinguish a problem with solution size 1 from one with solution size 2. These lower bounds are sufficient to show hardness of approximation, but do not rule out the possibility of asymptotic PTAS (even additive constant approximations). Furthermore, we have not been able to show any hardness for \cms\ with $O(1)$ configurations of $O(1)$ size, doing so is an important and interesting open problem.

Finally, our focus has been on the objective of minimizing the number of machines, which aims to meet all demands using minimum resources.  Our results can be extended to minimizing makespan, given a fixed number of machines.  However, approximations for other objectives such as completion time\junk{(for which we have preliminary results in Appendix~\ref{sec:numerical})} or flow time, in both offline and online settings, are important directions for further research.



\section{Logarithmic approximation for \cms}
\label{sec:LP+greedy}

In this section, we consider the most general model of \cms\ with an arbitrary configuration set $C$ over $k$ blocks, and $n$ jobs with demand functions $f$ and demands $d$.  The main result of this section is an $O(\log(\max_{\sigma \in C} \{|\sigma|\} \cdot n \cdot k)$-approximation algorithm for \cms\ given by Algorithm~\ref{alg:log-approx}. 

The following lemma presents an approximation-preserving reduction from multiset multicover to \cms, which implies that no polynomial time algorithm can achieve an approximation ratio better than $\Omega(\log nk)$ (assuming $\textsc{p} \neq \textsc{np}$). The lemma also implies that an improvement to our approximation ratio would yield an improvement to the best known approximation for multiset multicover. (For the proof of Lemma~\ref{lem:cms_hardness}, see Appendix~\ref{sec:cms_appendix}).
\begin{lemma}
    There is an approximation-preserving reduction from the multiset multicover problem to \cms.
\label{lem:cms_hardness}
\end{lemma}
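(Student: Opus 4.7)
I would give a direct, size-preserving encoding in which each element of the multicover universe becomes both a job and a block type, and each input multiset becomes a configuration whose block composition mirrors that multiset. Formally, given a multiset multicover instance with ground set $U=\{e_1,\ldots,e_m\}$, coverage requirements $d_1,\ldots,d_m$, and multisets $M_1,\ldots,M_t$, construct the \cms\ instance with $n=k=m$, job $j$ having demand $d_j$, block set $B=\{1,\ldots,m\}$, demand table $f_j(i)=1$ if $i=j$ and $0$ otherwise, and configuration set $C=\{\sigma_1,\ldots,\sigma_t\}$ where $\sigma_\ell$ contains, for each $i$, as many copies of block $i$ as the multiplicity of $e_i$ in $M_\ell$. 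The normalization conditions $\max_i f_j(i)\le d_j$ and $\min_{i:f_j(i)\ne 0} f_j(i)=1$ are satisfied as long as each $d_j\ge 1$, which we can assume.

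The key observation driving the correctness is that, since $f_j(i)=0$ whenever $i\ne j$, any block of type $i$ assigned to a job $j\ne i$ satisfies zero demand. Hence we may assume without loss of generality that in every machine each block of type $i$ is assigned to job $i$. Consequently, for a schedule using configurations $\sigma_\ell$ with multiplicities $a_\ell$, the total demand satisfied for job $j$ equals $\sum_\ell a_\ell\cdot(\text{multiplicity of }e_j\text{ in }M_\ell)$, which is exactly the coverage of element $e_j$ by the multicover solution that picks $a_\ell$ copies of $M_\ell$. Thus feasible \cms\ schedules are in one-to-one correspondence with feasible multiset multicover solutions, and the \cms\ objective $\sum_\ell a_\ell$ equals the multicover objective.

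Approximation preservation is then immediate: any $\alpha$-approximation for \cms\ on the constructed instance yields an $\alpha$-approximate multiset multicover solution by reading off the configuration multiplicities, and conversely an $\alpha$-approximate multicover solution gives an $\alpha$-approximate \cms\ schedule by selecting the corresponding configurations and assigning each block of type $i$ to job $i$. Combined with the $\Omega(\log m)$ hardness of multiset multicover inherited from set cover~\cite{dinur+s:repetition}, this gives the $\Omega(\log nk)$ hardness claimed in the section, since $n=k=m$ in the constructed instance.

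\textbf{Main obstacle.} I do not expect a real obstacle: the reduction is essentially syntactic and the diagonal structure of $f$ forces the demand-by-demand equality. The only thing to be careful about is that the reduction respects the input-format conventions of \cms\ (in particular the $\min_{i:f_j(i)\ne 0} f_j(i)=1$ scaling) and that ``approximation preserving'' is interpreted in both directions, both of which follow transparently from the one-to-one correspondence above.
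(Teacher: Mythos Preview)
Your reduction is correct and is essentially identical to the paper's own proof: both encode each element as simultaneously a job and a block type with a diagonal demand table ($f_j(i)=1$ iff $i=j$), and let each input multiset become a configuration, so that feasible schedules and multicover solutions correspond bijectively with equal objective value. Your write-up is slightly more explicit about checking the normalization conventions on $f$, but the construction and argument are the same.
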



\vspace{-0.3in}
\begin{algorithm}
\label{alg:log-approx}
    \textit{Formulate and Solve a Linear Relaxation (Constraints~1-4)} \hspace{\textwidth}
    Round variables down if their fractional component is less than $(1/2k)$\;
    \nonl\;
    
    \vspace{-2mm}
    \textit{Solve Problem over the Integer Components of Variables (Algorithm~\ref{alg:multisetmulticover_reduction})} 
    Solve Multiset Multicover problem defined by integer components of optimal solution to construct a partial schedule $S_1$\; \nonl\;

    \vspace{-2mm}
    \textit{Greedily Round the Fractional Components of Variables (Algorithm~\ref{alg:greedy})} \hspace{\textwidth} 
    Construct a partial schedule $S_2$ to satisfy any remaining demand by greedily configuring each machine to maximize throughput \; \nonl \;

    \vspace{-2mm}
    \textit{Output the schedule formed by the additive union $(S_1 \oplus S_1) \oplus (S_2 \oplus S_2)$} 
\caption{Logarithmic Approximation for \cms}
\end{algorithm}
\vspace{-0.2in}

The first step of Algorithm~\ref{alg:log-approx} consists in defining and solving the linear program \constraints, which minimizes $\sum_\sigma \y \sigma$ subject to:

\vspace{-0.2in}
\begin{align}
    &\textstyle\sum_j \x ij \le \sum_{\sigma \in C} \y \sigma \cdot \a \sigma i &&\forall \ \text{block types}\ i \in B \label{LP:blocks} \\
    &\textstyle\sum_i \f ji \cdot \x ij \ge \d j &&\forall \ \text{jobs}\ j \label{LP:demand} \\
    &\x ij \ge 0 &&\forall \ \text{block types}\  i \in B \ \text{and jobs}\ j \label{LP:x-nonneg} \\
    &\y \sigma \ge 0 &&\forall \ \text{configurations}\  \sigma \in C \label{LP:y-nonneg}
\end{align}

\textbf{Terms.} 
Each variable $\x ij$ indicates the number of blocks of type $i$ that are assigned to execute job $j$. Each variable $\y \sigma$ indicates the number of machines that use configuration $\sigma$. 
The term $\a \sigma i$ is the (constant) number of blocks of type $i$ in configuration $\sigma$.

\textbf{Constraints.}
Constraint~\ref{LP:blocks} ensures a schedule cannot use more blocks of a given type than appear across all allocated machines. Constraint~\ref{LP:demand} states that the total number of blocks executing a job must be sufficient to satisfy its demand. 

\medskip
\noindent Let $(x^*, y^*)$ be an optimal solution to \constraints. For the second step of Algorithm~\ref{alg:log-approx}, we separate the integer from the fractional components of the $x$-variables.  We define $\bar x_{i,j} = \floor{x^*_{i,j}}$. Let $z^*_{i,j} = x^*_{i,j} - \floor{x^*_{i,j}}$. We define $\hat x_{i,j} = 0$ if either (i) $z^*_{i,j} < \frac{1}{2k}$ or (ii) $\f ji \cdot z^*_{i,j} < \max_{i'}\{\f j{i'} \cdot z^*_{i',j}\}/k$, otherwise  $\hat x_{i,j} = 2z^*_{ij}$. The second step of Algorithm~\ref{alg:log-approx} then uses Algorithm~\ref{alg:multisetmulticover_reduction} to provide a schedule for the problem $(C, f, \bar d)$ defined over $\bar x$ (i.e. $\bar d_j = \min\{\d j, \sum_i \f ji \cdot \bar x_{i,j}$). 

\begin{heuristic}
    We define the set $A = \big\{ (\sum_j \floor{\x ij}, i) \big\}$ of multiplicity-block pairs. 
    We construct schedule $S_1$ by using the greedy multiset multicover algorithm given in \cite{rajagopalan+v:cover} on the instance $(A,C)$.
\label{alg:multisetmulticover_reduction}
\end{heuristic}
Step three of Algorithm~\ref{alg:log-approx} then constructs a schedule $S_2$ to satisfy any remaining demand given by the fractional components $\hat x$ via Algorithm~\ref{alg:greedy}, which greedily allocates the highest throughput machines until all demands are met.  Finally, step four of Algorithm~\ref{alg:log-approx} outputs the schedule $S$ such that:
$(a_1, \mu) \in S_1$ and $(a_2, \mu) \in S_2$ iff $(2(a_1+a_2),\mu) \in S$.

\begin{algorithm}
\setstretch{1.25}
    \KwInput{a \cms\ instance $(C,f,d)$ and block-job indexed variables $\hat x$}
    \KwInit{$\forall j, D_j \gets \min\{\d j, \sum_i \hat x_{i,j} \cdot \f ji\};\ S_2 \gets \varnothing$}
    \While{some job is not fully executed (i.e. $\sum_j D_j > 0$)}{
        $\mu \gets $ Algorithm~\ref{alg:throughput} on input $(C,f,D)$\;
        add $a^*$ machines $\mu$ to $S_2$, where $m^* =$ \hfill
        $\displaystyle \min_{j} \Big\{ a : D_j - \Big(a \cdot \sum_{i: \mu (i) = j} \f ji \Big) < \max_{i:\mu(i)=j} \Big\{ \min \{ \f ji, D_j\} \Big\} \Big\}$  \;
        $\forall j,\ D_j \gets \max \Big\{ 0, D_j - \Big( a^* \cdot \sum_{i: \mu(i) = j)} \f ij \Big) \Big\}$
    }
    \Return $S_2$
 \caption{Highest Throughput Placement First}
\label{alg:greedy}
\end{algorithm}

\begin{heuristic}
    \textit{On input $(C,f,d)$.} Iterate over each configuration $\sigma \in C$ and each block $i \in \sigma$. Assign to block $i$ the job $j$ that maximizes $\min\{\f ji, D_j\}$ where $D_j$ is the remaining demand of $j$. Output the maximum throughput machine.
\label{alg:throughput}
\end{heuristic}

In the remainder of the section, we provide analysis of Algorithm~\ref{alg:log-approx}.
Our first two lemmas establish that Algorithm~\ref{alg:log-approx} runs in polynomial time, and that an optimal solution to \constraints\ lower bounds the length of an optimal schedule. (See Appendix~\ref{sec:cms_appendix} for proofs.) 

\begin{lemma}
    Algorithm 1 runs in time polynomial in $n$, $k$, $|C|$, and $\max_{\sigma \in C} \{|\sigma|\}$.%
\label{lem:cms_polytime}
\end{lemma}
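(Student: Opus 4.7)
\medskip

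\noindent\textbf{Proof proposal for Lemma~\ref{lem:cms_polytime}.}

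The plan is to bound the running time of each of the four stages of Algorithm~\ref{alg:log-approx} separately, and show each is polynomial in $n$, $k$, $|C|$, and $c = \max_{\sigma \in C}\{|\sigma|\}$.

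\emph{Stage 1 (LP solve and rounding).} The linear program \lp{} has $nk + |C|$ variables and $k + n + nk + |C|$ constraints, all with coefficients that are either 0, 1, or entries of the input matrices $f$ and $a$. Hence it has polynomial size in the input, and any polynomial-time LP solver (e.g., the ellipsoid method or interior point) produces an optimal extreme-point solution $(x^*,y^*)$ in polynomial time. Forming $\bar x$ and $\hat x$ from $(x^*,y^*)$ takes $O(nk)$ arithmetic operations.

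\emph{Stage 2 (multiset multicover).} The reduction in Algorithm~\ref{alg:multisetmulticover_reduction} produces an instance with $k$ distinct elements and $|C|$ sets, each of size at most $c$. The only subtlety is that the multiplicities $\sum_j \lfloor x_{i,j}^*\rfloor$ may be large in magnitude (polynomial in the binary-encoded input, not unary). However, the Rajagopalan--Vazirani greedy algorithm~\cite{rajagopalan+v:cover} can be implemented so that at each iteration one picks the most cost-effective configuration and takes as many copies as possible simultaneously, until some element's residual requirement is saturated. Since each iteration saturates at least one of the $k$ elements, the number of iterations is at most $k$, and each iteration does $\poly(|C|, k, c)$ work plus arithmetic on numbers bounded by the LP optimum (hence of polynomial bit-length). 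So Stage 2 runs in $\poly(n,k,|C|,c)$ time.

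\emph{Stage 3 (greedy throughput rounding).} Each call to Algorithm~\ref{alg:throughput} iterates over all $|C|$ configurations and the $\le c$ blocks in each, and for each block scans all $n$ jobs to find the one maximizing $\min\{f_{j,i},D_j\}$; this is $O(|C|\cdot c\cdot n)$ per call. The main obstacle — and the step I expect to require the most care — is bounding the number of iterations of the while loop in Algorithm~\ref{alg:greedy}, since naively the number of machines added can be exponentially large. The key observation is that $a^*$ is chosen precisely so that after one iteration, for at least one job $j$, the remaining demand $D_j$ drops below the value $\max_{i:\mu(i)=j}\{\min\{f_{j,i},D_j\}\}$. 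This either completes job $j$ or causes some $\min\{f_{j,i},D_j\}$ term to decrease, strictly reducing the throughput that block type $i$ contributes to job $j$ in all subsequent iterations. Since each job has at most $k$ distinct values of $f_{j,i}$, each job can trigger this event at most $k+1$ times, yielding at most $n(k+1)$ iterations in total. Combined with the per-iteration cost and the polynomial bit-length of all quantities computed (since $a^*$ is a minimum over $n$ rational expressions whose numerators and denominators are bounded by the LP data), Stage 3 runs in $\poly(n,k,|C|,c)$ time.

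\emph{Stage 4 (combination).} Constructing $S = (S_1 \oplus S_1) \oplus (S_2 \oplus S_2)$ amounts to doubling the multiplicities on the machines output by Stages 2 and 3, which is linear in the size of the two partial schedules, themselves of polynomial size by the above. Summing the four stages completes the argument.
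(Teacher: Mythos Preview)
Your proof is correct and follows the same stage-by-stage decomposition as the paper. The paper's argument is terser: it simply cites \cite{rajagopalan+v:cover} for Stage~2, and for Stage~3 gives the same $D_j$-crosses-$f_j(i)$ counting argument you use (with a bound of $2nk$ iterations rather than your $n(k+1)$). Your added discussion of batching the multiset-multicover greedy to handle large multiplicities is a point the paper glosses over; one minor caveat is that batching until an element is \emph{fully} saturated may not reproduce the naive greedy's choices exactly (the most cost-effective configuration can change before any element is saturated, once some residual drops below a configuration's multiplicity), but this does not affect the polynomial-time conclusion of this lemma.
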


\begin{lemma}
    The optimal solution to \constraints\ is at most \opt.
\label{lem:LP}
\end{lemma}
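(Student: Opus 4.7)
The plan is the standard ``the LP is a valid relaxation'' argument: exhibit a feasible LP solution whose objective equals $\opt$, which forces the LP optimum to be at most $\opt$.

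First, I would fix an optimal integral schedule $S^*$ that uses $\opt$ machines. From $S^*$ I construct a candidate LP solution $(\tilde x, \tilde y)$ by setting $\tilde y_\sigma$ to be the total multiplicity (across all pairs $(a,\mu) \in S^*$ whose machine $\mu$ uses configuration $\sigma$) of machines with configuration $\sigma$, and $\tilde x_{i,j}$ to be the total number of blocks of type $i$ assigned to job $j$ across all machines of $S^*$, i.e. $\tilde x_{i,j} = \sum_{(a,\mu)\in S^*} a \cdot |\{i' : \mu(i') = j,\ i' = i\}|$.

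Next I would verify the four LP constraints in turn. Constraint~(\ref{LP:blocks}) holds because the total number of type-$i$ blocks used to process \emph{any} job in $S^*$ is exactly $\sum_{(a,\mu)\in S^*} a\cdot \a{\mu}i$, which by definition of $\tilde y$ equals $\sum_\sigma \tilde y_\sigma \a\sigma i$; so $\sum_j \tilde x_{i,j}$ equals (not merely bounded by) the right-hand side. Constraint~(\ref{LP:demand}) holds because $S^*$ is a feasible schedule for \cms, so the total demand satisfied for each job $j$, namely $\sum_i \f ji \cdot \tilde x_{i,j}$, is at least $\d j$ by the feasibility condition $\sum_{(a,\mu)\in S^*} a \sum_{i:\mu(i)=j} \f ji \ge \d j$. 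Non-negativity constraints (\ref{LP:x-nonneg}) and (\ref{LP:y-nonneg}) are immediate since the $\tilde x$ and $\tilde y$ values are counts.

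Finally, the LP objective at $(\tilde x,\tilde y)$ is $\sum_\sigma \tilde y_\sigma = \sum_{(a,\mu)\in S^*} a = \opt$. Hence the value of the LP optimum, being the minimum of its objective over a feasible region that contains $(\tilde x,\tilde y)$, is at most $\opt$. There is no real obstacle here; the only thing to be careful about is bookkeeping when translating the schedule's block-to-job mapping into the aggregated $\tilde x_{i,j}$ counts and confirming the inequality direction in Constraint~(\ref{LP:blocks}) (equality in the integral solution implies the $\le$ constraint).
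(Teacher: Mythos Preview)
Your proposal is correct and follows essentially the same approach as the paper: construct the LP solution from an optimal schedule by setting $y_\sigma$ to the number of machines using configuration $\sigma$ and $x_{i,j}$ to the number of type-$i$ blocks assigned to job $j$, then verify feasibility and observe that the objective equals $\opt$. Your write-up is more explicit about the bookkeeping (in particular the equality in Constraint~(\ref{LP:blocks})), but the argument is the same.
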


The following lemmas establish bounds on the lengths of the schedules produced by Algorithms~\ref{alg:greedy} and \ref{alg:throughput}. (For a proof of Lemma~\ref{lem:throughput}, see Appendix~\ref{sec:cms_appendix}.)

\begin{lemma}
    The machine computed by Algorithm~\ref{alg:throughput} for an input problem instance has at least half the maximum throughput of any machine for that instance.  
\label{lem:throughput}
\end{lemma}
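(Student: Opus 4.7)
Since Algorithm~\ref{alg:throughput} builds a greedy machine for each $\sigma \in C$ and returns the one with maximum throughput, it suffices to fix any configuration $\sigma$ and any machine $\mu^*$ on $\sigma$, and show that the greedy machine $\mu_g$ for $\sigma$ satisfies $T(\mu_g) \ge T(\mu^*)/2$, where $T(\cdot)$ denotes throughput. Order the blocks $i_1, \ldots, i_c$ of $\sigma$ in greedy's processing order. For each $t$, let $j_t$ be greedy's chosen job, $v_t = \min\{\f{j_t}{i_t}, D_{j_t}^{(t-1)}\}$ its marginal value, and $D_j^{(t)}$ the demand of $j$ remaining after greedy's first $t$ assignments (with $D_j^{(0)} = \d j$). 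Let $j_t^* = \mu^*(i_t)$, and analogously define $\hat D_j^{(t)}$ by applying $\mu^*$'s assignment to the first $t$ blocks, together with $u_t = \min\{\f{j_t^*}{i_t}, \hat D_{j_t^*}^{(t-1)}\}$. A standard telescoping then yields $T(\mu_g) = \sum_t v_t$ and $T(\mu^*) = \sum_t u_t$.

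Greedy's maximizing rule guarantees $v_t \ge \min\{\f{j_t^*}{i_t}, D_{j_t^*}^{(t-1)}\}$. Consequently, if greedy has accumulated no more work on $j_t^*$ than $\mu^*$ by step $t-1$ (i.e., $D_{j_t^*}^{(t-1)} \ge \hat D_{j_t^*}^{(t-1)}$), then $v_t \ge u_t$ directly. Otherwise, using the standing assumption $\f j i \le \d j$, the shortfall $u_t - v_t$ is at most $\hat D_{j_t^*}^{(t-1)} - D_{j_t^*}^{(t-1)}$, which is exactly the excess $G_{j_t^*}^{(t-1)} - \hat G_{j_t^*}^{(t-1)}$ of greedy's cumulative contribution to $j_t^*$ over $\mu^*$'s, where $G_j^{(t)} = \d j - D_j^{(t)}$ and $\hat G_j^{(t)} = \d j - \hat D_j^{(t)}$.

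The central obstacle is a per-job bank-account argument. Fix a job $j$, and let $A_t, B_t$ denote $\mu^*$'s and greedy's cumulative contributions to $j$ through step $t$, with $a_t = A_t - A_{t-1}$, $b_t = B_t - B_{t-1}$, and $B_c = G_j$. The per-step bound above shows that the total shortfall attributed to $j$ is at most $\sum_t \min\{a_t, (B_{t-1} - A_{t-1})^+\}$. I plan to prove by induction on $t$ that $\sum_{s \le t} \min\{a_s, (B_{s-1} - A_{s-1})^+\} \le B_t$: one views greedy's contributions to $j$ as deposits and $\mu^*$'s as withdrawals, and argues that withdrawals drawn from a positive balance can never exceed total deposits. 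Summing over all jobs then yields $T(\mu^*) - T(\mu_g) \le \sum_j G_j = T(\mu_g)$, and hence $T(\mu_g) \ge T(\mu^*)/2$ as claimed.
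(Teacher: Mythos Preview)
Your approach is correct, and the bank-account inequality you plan to prove is valid: the invariant $W_t \le \min\{A_t, B_t\}$ can be maintained by a routine case analysis on the sign of $B_{t-1} - A_{t-1}$ and on whether $a_t$ exceeds the current positive balance. Summing over jobs then gives $T(\mu^*) - T(\mu_g) \le \sum_t (u_t - v_t)^+ \le \sum_j G_j = T(\mu_g)$, as you claim.

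The paper takes a somewhat different route that sidesteps the amortization entirely. Instead of comparing the marginals $u_t$ and $v_t$ head-to-head, it splits $\mu^*$'s per-block contribution into two pieces: $v^*_i$, the part that lies within the demand of $\mu^*(i)$ that greedy \emph{does} eventually satisfy, and $w^*_i$, the part lying in demand that greedy \emph{never} satisfies. Then $\sum_i v^*_i \le \sum_j u_j = T(\mu_g)$ holds trivially (both are bounded above by greedy's per-job throughput $u_j$), and the crucial point is the \emph{pointwise} inequality $w^*_i \le v_i$: since $w^*_i \le \min\{\f{\mu^*(i)}{i},\, \d{\mu^*(i)} - u_{\mu^*(i)}\}$ and greedy's remaining demand at step $i$ is always at least the final unsatisfied amount $\d{\mu^*(i)} - u_{\mu^*(i)}$, the greedy rule gives $v_i \ge w^*_i$ directly. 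This decomposition replaces your time-dependent comparison (which forces the bank-account bookkeeping) with a time-independent lower bound on $D_{j^*}^{(t-1)}$, so no induction is needed. Your argument is arguably the more natural first attempt; the paper's buys a shorter proof at the cost of a less obvious decomposition.
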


\begin{lemma}
    Given an instance $(C',f',d')$ with an optimal schedule of length $\rho$, Algorithm~\ref{alg:greedy} produces a schedule with length $3 \cdot \rho \cdot \log \sum_j d'_j$.
\label{lem:sum_of_demands}
\end{lemma}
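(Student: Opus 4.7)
The plan is a greedy-coverage analysis adapted to the multiplicity structure of Algorithm~\ref{alg:greedy}. Let $D^{(t)}$ denote the residual demand vector at the start of iteration $t$ and $\Phi_t = \sum_j D_j^{(t)}$ its $\ell_1$-norm, and let $\mu_t$ and $a^*_t$ be the configuration and multiplicity chosen in iteration $t$, so the total schedule length is $M = \sum_t a^*_t$. Because $\Phi_t$ is integer-valued and terminates at $0$, any geometric decay of the form $\Phi_{t+1} \le \Phi_t \bigl( 1 - \Omega(a^*_t / \rho) \bigr)$ gives $M = O(\rho \log \Phi_0)$, and tuning constants will produce the stated factor of $3$.

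The first step is a pigeonhole bound on the best available single-machine throughput at residual demand $D^{(t)}$. Since the optimal schedule of length $\rho$ satisfies the full demand $d' \ge D^{(t)}$ componentwise, its $\rho$ machines collectively cover at least $\Phi_t$ units of residual demand, so some optimal configuration has throughput at least $\Phi_t/\rho$ when evaluated at $D^{(t)}$. Applying Lemma~\ref{lem:throughput} to Algorithm~\ref{alg:throughput} run on input $(C, f, D^{(t)})$, the selected configuration $\mu_t$ achieves throughput at least $\Phi_t/(2\rho)$.

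The second step converts this single-machine throughput guarantee into a per-iteration reduction. Writing $\alpha_j = \sum_{i : \mu_t(i) = j} f_j^i$ and $\beta_j = \max_{i : \mu_t(i) = j} \min\{f_j^i, D_j^{(t)}\}$, the definition of $a^*_t$ as the smallest multiplicity that triggers the stopping condition on some job forces the first $a^*_t - 1$ copies of $\mu_t$ to cause no clamping on any job, so they collectively reduce $\Phi$ by exactly $(a^*_t - 1) \sum_j \alpha_j$. The last copy contributes at least the clamped throughput of $\mu_t$ at $D^{(t)}$, which by step one is $\ge \Phi_t/(2\rho)$. Combining, $\Phi_t - \Phi_{t+1} \ge \Omega(a^*_t \cdot \Phi_t/\rho)$; chaining over $t$ yields $\Phi_T \le \Phi_0 \exp(-\Omega(M/\rho))$, and integrality of $\Phi$ forces $M = O(\rho \log \sum_j d'_j)$, sharpening the constants (e.g.\ using $4 \ln 2 < 3$) to give the factor $3$.

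The main obstacle is making this per-iteration reduction tight enough to land at $3$ rather than a larger constant. A naive bound that throws away the $a^*_t$-th copy loses a factor of $2$ on top of the factor of $2$ already contributed by Lemma~\ref{lem:throughput}, so one has to do a small case split between $a^*_t = 1$ (where the single copy, by definition, realizes its full clamped throughput $\ge \Phi_t/(2\rho)$) and $a^*_t \ge 2$ (where the uncapped contribution of the first $a^*_t - 1$ copies already dominates), and then carefully combine the resulting bounds so the amortized reduction per machine remains at least a $1/(c\rho)$ fraction of $\Phi_t$ with $c$ small enough that $c \ln 2 \le 3$.
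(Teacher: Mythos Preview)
Your approach is sound and reaches the same conclusion, but by a different route than the paper. The paper unrolls the multiplicities, groups the greedy machines into consecutive batches of size $\rho$, and for each batch compares it against an optimal $\rho$-machine schedule for the residual instance using a charging scheme (the quantities $V_m$, $V^*_m$, $W^*_m$); it concludes that every batch of $\rho$ machines removes at least a quarter of the remaining total demand. Your per-iteration geometric-decay analysis is the more familiar greedy-cover template and is arguably cleaner, since it avoids the machine-by-machine comparison of two length-$\rho$ schedules; the paper's batched argument, on the other hand, never needs your case split on $a^*_t$.

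Two small points to tighten. First, the additive claim in your second paragraph---that the reduction is at least $(a^*_t-1)\sum_j\alpha_j$ \emph{plus} the clamped single-machine throughput---is not correct: a job $j$ with $(a^*_t-1)\alpha_j\le D_j^{(t)} < a^*_t\alpha_j$ contributes only $D_j^{(t)}$ to the reduction, which can fall short of $(a^*_t-1)\alpha_j+\min\{D_j^{(t)},\alpha_j\}$. Your last paragraph already retreats to the right case split, and since $(a^*_t-1)/2\ge a^*_t/4$ for $a^*_t\ge 2$, the uniform bound $\Phi_t-\Phi_{t+1}\ge a^*_t\Phi_t/(4\rho)$ does hold and $4\ln 2<3$ delivers the constant. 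Second, $\Phi_t$ is not integer-valued in this model, so that clause should be dropped; to handle the final iteration, use instead that $(a^*_t-1)\sum_j\alpha_j\le\Phi_t$ together with $\sum_j\alpha_j\ge\Phi_t/(2\rho)$ forces $a^*_t\le 2\rho+1$, so the last iteration adds only $O(\rho)$ machines.
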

\begin{proof}
    \newcommand{\da}[1]{d^{(a)}_{#1}}
    Let $S$ be the schedule produced by Algorithm~\ref{alg:greedy}. 
    We index machines $\mu_m$ by the order in which they are allocated by Algorithm~\ref{alg:greedy} (for the purposes of this proof, we treat machines individually, not as multiplicities). We define 
    $\da j = \max \Big \{0, \ \d j -  \sum_{m=1}^{a\rho} \ \sum_{i: \mu_m(i) = j} \f ji \Big\}$.
    Informally, $\da j$ is the amount of job $j$'s demand remaining after Algorithm~\ref{alg:greedy} schedules its first $a\rho$ machines.  Let $I_a = (C',f',\da{})$ be the instance defined over this remaining demand. We show that for any integer $a$, the total throughput of machines $a\rho+1$ through $a\rho + \rho$ of $S$ is at least $\frac{1}{4} \sum_j \da j$. This is sufficient to prove the lemma.

    Consider an arbitrary $a$ and the set $M$ of machines $a\rho + 1$ through $a\rho+\rho$ of $S$. Let $S^*$ be an optimal schedule for $I_a$, and let $\mu^*_m$ be the $m$th machine of $S^*$, ordered arbitrarily. (We can infer that the length of $S^*$ is at most $\rho$.)
    For every job $j$ and index $m$ (restricted to $a\rho + 1$ through $ (a+1)\rho$), we define 
    \begin{center}\begin{tabular}{c}
         $u_j = \min \Big\{ \da j,\ \sum_{m} \sum_{i: \mu_m(i) = j} \f ji \Big\}$  \\
         $v_{j,m} = \min \Big\{ \sum_{i:\mu_m(i)=j} \f ji,\ u_j - \sum_{m' < m} v_{j,m'} \Big\}$\\ 
         $v^*_{j,m} = \min \Big\{ \sum_{i:\mu^*_m(i)=j} \f ji,\ u_j - \sum_{m' < m} v^*_{j,m}  \Big\}$ \\
         $w^*_{j,m} = \min \Big\{\sum_{i:\mu^*_m(i)=j} \f ji - v^*_{j,m},\ \da j - \sum_{m'<m} w^*_{j,m} + v^*_{j,m} \Big\}$
    \end{tabular}\end{center}
    We also define $V_m = \sum_j v_{j,m}$ and $V^*_m = \sum_j v^*_{j,m}$ and $W^*_m = \sum_j w^*_{j,m}$.
    These definitions imply that $\sum_m V_m = \sum_j u_j$ and $\sum_m V^*_m = \sum_j u_j$ and $\sum_m W^*_m + V^*_m = \sum_j \da j$.
    In this way, $V_m$ represents the total reduction in demand when Algorithm~\ref{alg:greedy} allocates machine $\mu_m$, and $V^*_m$ (resp. $W^*_m$) represents the amount of demand satisfied by machine $\mu_m$ in $S^*$ that is (resp. not) satisfied by $S$. So it is sufficient to show that $\sum_m W^*m \le 2 \sum_m V_m$. Suppose, for the sake of contradiction, that for some machine $\mu_m$ we have $W^*_m > 2 \sum_m V_m$. Because $W^*_m$ represents demand not satisfied by $S$, Algorithm~\ref{alg:throughput} would choose $\mu^*_m$ rather that $\mu_m$, by Lemma~\ref{lem:throughput}. This is a contradiction, which proves the lemma. \qed
\end{proof}

\begin{theorem}
    Algorithm~\ref{alg:log-approx} is $O(\log (\max_{\sigma \in C}\{|\sigma|\} \cdot n \cdot k))$-approximate. 
\label{thm:cms}
\end{theorem}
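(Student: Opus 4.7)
The plan is to decompose the final schedule $S = (S_1 \oplus S_1) \oplus (S_2 \oplus S_2)$ and bound $|S_1|$ and $|S_2|$ separately, each by $O(\log(cnk)) \cdot \opt$, then combine. From Lemma~\ref{lem:LP}, the LP value is at most $\opt$, so we have access to $(x^*, y^*)$ with $\sum_\sigma y^*_\sigma \le \opt$ and the decomposition $x^* = \bar x + z^*$.

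For $|S_1|$, I would observe that the multiset-multicover instance $(A, C)$ built in Algorithm~\ref{alg:multisetmulticover_reduction} has a feasible fractional cover given by $y^*$: summing the integer part of constraint~(\ref{LP:blocks}) gives $\sum_j \bar x_{ij} \le \sum_j x^*_{ij} \le \sum_\sigma y^*_\sigma a_{\sigma,i}$ for every block type $i$. The Rajagopalan--Vazirani greedy for multiset multicover (\cite{rajagopalan+v:cover}) then produces a cover of cardinality $O(\log c)$ times this fractional optimum, where $c = \max_{\sigma \in C} |\sigma|$ bounds the size of any multiset, yielding $|S_1| = O(\log c) \cdot \opt$. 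For $|S_2|$, I would apply Lemma~\ref{lem:sum_of_demands} to the residual instance $(C, f, D)$ with $D_j = \min\{\d j, \sum_i \hat x_{ij} \f ji\}$. The optimal $\opt$ schedule is feasible for this subinstance since $D_j \le \d j$, so $\rho \le \opt$. Combining with the lemma, $|S_2| = O(\opt \cdot \log \sum_j D_j)$, and it remains to show $\log \sum_j D_j = O(\log(cnk))$.

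The main obstacle will be bounding $\sum_j D_j$. The key is the thresholding: $\hat x_{ij}$ is nonzero only if $z^*_{ij} \ge \tfrac{1}{2k}$ and $\f ji z^*_{ij} \ge M_j/k$ with $M_j = \max_{i'} \f j{i'} z^*_{i',j}$, so for each job at most $k$ entries survive, each at most $M_j$, giving $D_j \le 2k M_j$. I would then charge $M_j$ to the LP solution by noting that a single block of type $i^*_j = \argmax_i \f ji z^*_{ij}$ contributes at most $\f j{i^*_j} \le \d j$, and summing over jobs exploit $\sum_{i,j} x^*_{ij} \le c \sum_\sigma y^*_\sigma \le c \cdot \opt$ (from summing~(\ref{LP:blocks}) over $i$) together with $\f j{i^*_j} z^*_{i^*_j, j} \ge T_j/k$ to conclude $\sum_j D_j$ is polynomial in $c, n, k, \opt$, so $\log \sum_j D_j = O(\log(cnk))$ when $\opt \le (cnk)^{O(1)}$ and is otherwise subsumed by the trivial $\opt$ upper bound.

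Finally, I would verify feasibility of $2S_1 \oplus 2S_2$ by casework on whether $\hat x_{\cdot, j}$ is fully thresholded. If every entry fails the threshold, the argument is that the max-contributing index $i^*_j$ always satisfies condition (ii) (since $M_j \not < M_j/k$), so it can only be dropped by condition (i), which forces $z^*_{i^*_j, j} < 1/(2k)$ and hence $T_j = \sum_i \f ji z^*_{ij} \le k M_j < \max_i \f ji / 2 \le \d j/2$; combining with $\sum_i \f ji \bar x_{ij} \ge \d j - T_j$ from constraint~(\ref{LP:demand}) shows $2 S_1$ alone supplies $\d j$ units for job $j$. Otherwise, $i^*_j$ survives in $\hat x$, and since $\hat x = 2z^*$ on the support, combining $2S_1$ (which delivers $2 \sum_i \f ji \bar x_{ij}$) with $2S_2$ (which delivers $2 D_j$) covers $\d j$ via the LP demand constraint. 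Putting everything together, $|S| = 2(|S_1| + |S_2|) = O(\log(cnk)) \cdot \opt$, proving the theorem.
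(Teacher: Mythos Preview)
Your overall decomposition into $S_1$ and $S_2$ matches the paper's, and your treatment of $S_1$ via the Rajagopalan--Vazirani greedy against the fractional cover $y^*$ is in the right spirit (the paper invokes the same lemma, quoting the $\log(|U|\cdot\max_{S'}|S'|)$ ratio rather than $O(\log c)$, but either way this is $O(\log(cnk))\cdot\opt$). The genuine gap is in the $S_2$ analysis, specifically in establishing $\log\sum_j D_j = O(\log(cnk))$.

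Your charging argument does not close. You correctly obtain $D_j \le 2kM_j$ with $M_j = \max_i f_j(i)\,z^*_{ij}$, but $M_j$ can be as large as $d_j$, and none of the inequalities you list controls $\sum_j M_j$ in terms of $c,n,k,\opt$ alone. The bound $\sum_{i,j} x^*_{ij} \le c\cdot\opt$ counts \emph{blocks}, not weighted demand, so it cannot absorb the $f_j(i)$ factor; the best you extract is $\sum_j D_j \le \mathrm{poly}(c,k,\opt)\cdot\max_j d_j$, and $\log\max_j d_j$ is not bounded by $\log(cnk)$. Your fallback to a ``trivial $\opt$ upper bound'' when $\opt$ is large does not exist: Lemma~\ref{lem:sum_of_demands} only gives $|S_2|\le 3\rho\log\sum_j D_j$, and nothing sidesteps the $\log\sum_j D_j$ term.

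The idea you are missing is a \emph{per-job rescaling} of the residual instance before invoking Lemma~\ref{lem:sum_of_demands}. The thresholding guarantees that for each fixed $j$ the surviving values $\hat x_{ij}$ lie within a factor $2k$ of one another and the surviving contributions $f_j(i)\hat x_{ij}$ lie within a factor $k$ of one another; hence the surviving $f_j(i)$ values lie within a factor $2k^2$ of one another. Rescaling each job so that its smallest surviving $\hat f_j(i)$ equals $1$ then forces $\max_{i,j}\hat f_j(i)\le 2k^2$. The paper bounds $\sum_j\hat d_j \le \rho\cdot c\cdot\max_{i,j}\hat f_j(i)$ and uses $\rho\le nk$ (in the residual instance each job needs at most one block of each type, so $nk$ blocks suffice), giving $\sum_j\hat d_j\le 2cnk^3$ and thus $\log\sum_j\hat d_j = O(\log(cnk))$ independently of the original demands. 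Combined with $\rho\le\opt$ for the prefactor, this yields $|S_2| = O(\opt\log(cnk))$.

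A minor point on feasibility: your case split ``every entry fails'' versus ``otherwise, $i^*_j$ survives'' is not quite right, since some entry can survive while $i^*_j$ fails condition~(i). The correct split is on whether $i^*_j$ itself survives: if it fails, then necessarily $z^*_{i^*_j,j}<1/(2k)$, so $T_j<d_j/2$ and $2S_1$ already covers $j$ (regardless of other entries); if it survives, your Case~2 computation goes through.
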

\begin{proof}
    Let $S_1$ represent the schedule produced by Algorithm~\ref{alg:multisetmulticover_reduction} and let $S_2$ represent the schedule produced by Algorithm~\ref{alg:greedy}. 
    We first argue that $S_1$ has length $O(\log(\max_\sigma \{|\sigma|\} \cdot n))\cdot \opt$.
    Algorithm~\ref{alg:multisetmulticover_reduction} reduces scheduling the integer components of the variables to an instance of multi-set multi-cover in which there are $n$ elements and in which the largest covering multi-set has size $\max_\sigma \{|\sigma|\}$. The claim follows directly from Lemmas~\ref{lem:LP} and \ref{lem:multiset_multicover} (see Appendix~\ref{sec:cms_appendix}).
    
    We now show that $S_2$ has length $O(\log (\max_\sigma \{|\sigma|\} \cdot nk)) \cdot \opt$. Let $\hat d_j = \min\{\d j, \sum_i \f ji \cdot \hat x_{i,j}\}$ be the demands satisfied by $S_2$, and let $\hat f$ be the execution function scaled relative to $\hat d$. By Lemma~\ref{lem:sum_of_demands}, we need only to bound $\sum_j \hat d_j$.
    
    Let $S^*$ be the optimal schedule of $(C,\hat f,\hat d)$ and  let $\rho$ be the length of $S^*$.
    Since the optimal solution satisfies all demand, we have that 
    \[\textstyle\sum_j \hat d_j \le \sum_{\mu \in S^*} \sum_i \hat f_{\mu(i)}(i) \le \rho \cdot \max_{\sigma \in C}\{|\sigma|\} \cdot \max_{i,j} \hat f_j(i)  \]
    We can infer $\rho \le nk$ because $(C,\hat f, \hat d)$ is defined over $\hat x$, so each job can be completely executed by one block of each type. Also, the definition of $\hat x$ entails that for each $j$, every nonzero value of $\hat x_{i,j}$ (resp. $\hat f_j(i) \cdot \hat x_{i,j}$) is within a factor of $2k$ (resp. $k$) of every other.  After scaling, this implies $\max_{i,j} \{\hat f_j(i)\} \le 2k^2$. So, $\sum_j d'_j \le 2nk^3 \cdot  \max_{\sigma \in C}\{|\sigma|\}$ and $\log \sum_j d'j = O(\log( \max_{\sigma \in C}\{|\sigma|\} \cdot nk))$. 
    
    Finally, in defining $\hat x$, we rounded down $x^*_{i,j}$ if (i) $z^*_{i,j} < 1/2k$ or if (ii) $z^*_{i,j} \cdot \f ji < \max_{i'}\{z^*_{i',j} \cdot \f j{i'}\}/k$. Job $j$'s total reduction in demand from (i) is no more than $\d j \sum_i x^*_{i,j} - \bar x_{i,j} \le \d j/2$, which is accounted for by doubling $S_1$ and $S_2$ in the output. Job $j$'s total reduction in demand due to (ii) is at most $\max_{i'}\{z^*_{i',j} \cdot \f j{i'}\}$ which is accounted for in setting $\hat x_{i,j} = 2z^*_{i,j}$ for all remaining $i$'s. Each increases our approximation ratio by factors of two. \qed
\end{proof}


\junk{
The following corollary establishes a similar result for a variation of \numcms\ that allows for a restricted configuration set. The corollary follows from the fact that, in this setting, the maximum size of any configuration is $k$. 

\begin{corollary}
    In the setting where $\sum_{i \in \sigma} i$ is at most some input $k$ for every configuration $\sigma \in C$, Algorithm~\ref{alg:log-approx} is $O(\log nk)$-approximate.
\end{corollary}
}

\section{\cms\ with $O(1)$ configurations}
\label{sec:cms_constant}
We consider \cms\ with $n$ jobs and a set $C$ of $O(1)$ configurations, each of arbitrary size. We first observe (see Appendix~\ref{app:constant_C}) that the problem is NP-hard to approximate to within a factor of two. Our main result in this section is a polynomial time algorithm with cost the minimum of $(2+\epsilon)\opt + |C|$ and $(3+\varepsilon)\opt$, for arbitrary $\varepsilon > 0$, where $\opt$ is optimal cost.  Our algorithm, given in Algorithm~\ref{alg:GraphMatching}, guesses the number of machines of each configuration in an optimal solution, to within a factor of $1 + \eps$ (see lines~3-4), and then builds on the paradigm of~\cite{LenstraShmoysTardos} by carefully rounding an extreme-point optimal solution for a suitable instantiation of \lp\ (given in line~6).  Using extreme-point properties, we establish the following lemma, the proof of which is in Appendix~\ref{app:constant_C} and closely follows~\cite{LenstraShmoysTardos}. 







\begin{algorithm}[ht]
    \KwInput{A \cms\ instance $(C,f,d)$}
    $L \gets \{\,\lfloor(1+\varepsilon)^i\rfloor \,\mid\, 0 \leq i \leq \log_{1+\varepsilon} (\sum_j d_j)\,\}$\\
    $Sol \gets (0,\infty)$\\
    \ForEach{$C^* \in P(C)$, the powerset of $C$}{
        \ForEach{$(m_{\sigma_1}, ..., m_{\sigma_{|C^*|}}) \in L^{|C^*|}$}{
            $B^* \gets \{b\in B \ |\  \exists c \in C^*, b\in c\}$ is the block set of $C^*$\\
            Construct the following feasibility LP, $LP_{f}$:
            \vspace{-5pt}
            \setcounter{tempcounter}{\arabic{equation}}
            \setcounter{equation}{1}
            \begin{align}
                &\sum_j \x ij \le \sum_{\sigma_s \in C^*} m_{\sigma_s} \cdot a_{\sigma_s, i} &&\forall \ \text{block types}\ i \in B^*  \label{LP':blocks} \tag{$1'$} \\
                &\sum_i \f ji \cdot \x ij \ge \d j &&\forall \ \text{jobs}\ j \\
                &\x ij \ge 0 &&\forall \ \text{block types}\  i \in B^* \text{, jobs}\ j 
            \end{align}
            \setcounter{equation}{\arabic{tempcounter}}\\
            \vspace{-5pt}
            \If{{\upshape $LP_{f}$ is feasible with extreme-point $x$}}{
                Graph $G \gets (J \cup B^*, E)$
                with $E = \{\, (j,b) \,\mid\, x_{b,j} > 0 \,\}$\\

                \ForEach{{\upshape Component $S \in G$ that has a cycle $K$}}{
                        Pick job $j$ in $K$, and let $b_1, b_2$ be its neighbors in $K$\\
                        {\bf if} {$x_{b_1,j}\cdot \f j {b_1} \geq x_{b_2, j}\cdot \f j {b_2}$} {\bf then} {$E \gets E \setminus (j, b_2)$}\\
                        {\bf else} {$E \gets E \setminus (j, b_1)$}\\
                        Make $j$ the root of the remaining tree $S$\\
                }
                \ForEach{{\upshape Job} $j' \in G\cap J$}{
                        \textbf{for} $p$, the parent of $j'$, \textbf{do} $x^*_{p,j'} \gets \lfloor 2x_{p,j'}\rfloor$\\
                        \textbf{foreach} Child $c_i$ of $j'$ \textbf{do} $x^*_{c_i,j'} \gets \lceil 2x_{c_i,j'}\rceil$\\
                    }
                \textbf{foreach} Configuration $\sigma_i \in C^*$ \textbf{do} $y^*_{\sigma_i} \gets 2m_{\sigma_i} + 1$\\
                {\bf if } {$\sum_{i \in y^*} i < \sum_{j \in Sol[1]} j$} {\bf then }{$Sol \gets (x^*, y^*)$}\\
                \Break
            }
        }
    }
    \Return $Sol$
\caption{Schedule for \cms\ with $O(1)$ configurations}
\label{alg:GraphMatching}
\end{algorithm}

\begin{lemma}\label{lem:pseudo_forest}
    Every component in graph $G$ of line~8 has at most one cycle.
\end{lemma}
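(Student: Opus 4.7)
The plan is to apply the standard extreme-point counting argument of~\cite{LenstraShmoysTardos} componentwise on $G$. Since $x$ is an extreme point of $LP_f$, the set of constraints tight at $x$ has rank equal to the number of variables, $|B^*|\cdot|J|$. The nonnegativity constraints $x_{i,j}=0$ account for every zero variable, so there must exist a collection of tight block-capacity and demand constraints, linearly independent when restricted to the nonzero variables, whose cardinality equals the number $|E|$ of nonzero variables, which is exactly the number of edges of $G$.

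The key observation I would make next is that this restricted coefficient matrix decomposes block-diagonally along the components of $G$. The demand constraint for a job $j$ only has nonzero coefficients on edges incident to $j$, and any block sitting at the other end of such an edge lies in the same component as $j$; the symmetric statement holds for block-capacity constraints. Consequently, for each component $S$ with $v_S$ vertices (jobs plus blocks) and $e_S$ edges, the tight constraints and nonzero variables supported on $S$ form a sub-matrix $A_S$ with at most $v_S$ rows and exactly $e_S$ columns, contributing $\operatorname{rank}(A_S)$ to the total rank.

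Because the total rank equals $\sum_S \operatorname{rank}(A_S) = |E| = \sum_S e_S$, while $\operatorname{rank}(A_S)\le e_S$ holds termwise, equality must hold in every summand; combined with $\operatorname{rank}(A_S)\le v_S$ this forces $e_S\le v_S$ for every component $S$. A connected graph with at most as many edges as vertices is either a tree or a unicyclic graph, which establishes the lemma.

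The step I expect to require the most care is justifying the block-diagonal decomposition: it relies on $G$ being the bipartite support graph of $x$, so that no tight block or demand constraint has nonzero coefficients straddling two components, and the ranks therefore split additively rather than merely satisfying an inequality on their sum. Once that is in place, the bound $e_S\le v_S$ falls out of a simple rank accounting.
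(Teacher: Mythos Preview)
Your argument is correct and is essentially the same componentwise extreme-point counting as the paper's proof, which instead phrases the localization step as a contradiction: it restricts $LP_f$ to the jobs and blocks of a component, shows the restricted solution must itself be an extreme point (else $x$ would be a convex combination in the full LP), and then invokes the global bound (Lemma~\ref{lem:nonzero_vars}) on the restricted LP. Your block-diagonal rank decomposition reaches the same inequality $e_S\le v_S$ more directly and avoids having to verify feasibility of the restricted solutions, but the underlying idea is identical.
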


\begin{lemma}
    \label{lem:feasible}
Algorithm~\ref{alg:GraphMatching} returns a feasible integer solution to \lp.
\end{lemma}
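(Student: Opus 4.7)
The plan is to verify in turn that the output $(x^*, y^*)$ of Algorithm~\ref{alg:GraphMatching} is integer, non-negative, and satisfies constraints (\ref{LP:blocks}) and (\ref{LP:demand}). Integrality and non-negativity are immediate: each $x^*_{i,j}$ is the floor or ceiling of a non-negative real, and $y^*_\sigma = 2m_\sigma + 1$ is a positive integer.

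For the demand constraint (\ref{LP:demand}) I would do a case analysis on the role played by each job $j'$ in the rooted forest produced after cycle-breaking. In the generic case $j'$ is a non-root with parent block $p$. Combining the ceiling rounding on its children edges, the bound $\lfloor 2x\rfloor \ge 2x - 1$ on the parent edge, feasibility of $x$ for $LP_f$, and the standing assumption $\max_i \f j i \le \d j$, yields
\[\textstyle\sum_i \f {j'} i\, x^*_{i,j'} \;\ge\; 2\sum_i \f{j'}i\, x_{i,j'} \;-\; \f{j'}p \;\ge\; 2\d{j'} - \d{j'} \;=\; \d{j'}.\]
When $j'$ is the root of a cycle-free component, every incident edge is ceiled and the factor-of-two gain alone suffices. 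The delicate case is when $j'$ is the job chosen in cycle-breaking (and thus re-rooted in line~14): it has lost exactly one edge $(j', b_2)$ with $x_{b_2,j'}\f{j'}{b_2} \le x_{b_1,j'}\f{j'}{b_1}$. The removed weight is at most the kept weight, hence at most half their sum, and therefore at most $\tfrac12\sum_i x_{i,j'}\f{j'}i$; so the surviving fractional weight is at least $\d{j'}/2$. Because $j'$ is now the root, every surviving incident edge of $j'$ is ceiled, and doubling recovers $\d{j'}$ units of demand.

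For the block constraint (\ref{LP:blocks}), for every $i \in B^*$ I plan to chain
\[\textstyle\sum_j x^*_{i,j} \;\le\; 1 + 2\sum_j x_{i,j} \;\le\; 1 + 2\sum_{\sigma \in C^*} m_\sigma \a \sigma i \;\le\; \sum_{\sigma \in C^*} y^*_\sigma \a \sigma i.\]
The first inequality exploits the forest structure: in each tree (rooted at a job) a block vertex $i$ has at most one parent, which must be a job, so only one of $i$'s incident edges is ceiled (contributing at most $2x_{i,j'}+1$), while all other incident edges of $i$ go to children that are jobs and are rounded down (contributing at most $2x_{i,j}$ each). The second inequality is constraint (\ref{LP':blocks}). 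The third is where the $+1$ slack in $y^*_\sigma = 2m_\sigma+1$ pays off: since $i \in B^*$ there exists some $\sigma \in C^*$ with $\a\sigma i \ge 1$, so $\sum_\sigma \a\sigma i \ge 1$ absorbs the leftover $+1$. For $i \notin B^*$ the variable $x^*_{i,j}$ is zero by construction, so the constraint is trivial.

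The main obstacle is the cycle-breaking case for the demand constraint: naïvely, deleting a positive-weight edge from $j'$'s fractional support could drop the remaining weight below $\d{j'}/2$. What saves us is the algorithm's specific rule --- keeping the larger-weighted of $j'$'s two cycle edges and re-rooting the tree at $j'$ --- which simultaneously guarantees that the removed weight is at most half of $j'$'s total fractional weight and that every surviving edge incident to $j'$ is rounded up. After handling that case, the rest is routine extreme-point/forest bookkeeping in the spirit of Lenstra--Shmoys--Tardos.
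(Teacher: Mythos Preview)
Your proposal is correct and follows essentially the same route as the paper: extreme-point pseudoforest structure, root at the cycle-broken job, ceil child edges and floor the parent edge, and absorb the $+1$ per block type via $y^*_\sigma = 2m_\sigma + 1$ using $i \in B^*$. Your treatment of the generic (non-root) demand case is in fact slightly cleaner than the paper's---you use $\lfloor 2x_{p,j'}\rfloor \ge 2x_{p,j'} - 1$ together with $\f{j'}{p} \le \d{j'}$ directly, whereas the paper splits into two sub-cases (parent versus children satisfying half the demand) and in the parent sub-case argues $x_{p,j} \ge 1/2 \Rightarrow \lfloor 2x_{p,j}\rfloor \ge x_{p,j}$; both arrive at the same conclusion.
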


\begin{proof}
    Since the algorithm returns the least cost rounded solution over all iterations, we need to show that $(x^*, y^*)$ is a feasible integer solution to \lp. By definition $x^*_{i,j}$ and $y^*_\sigma$ are integers for each $i$, $j$, $\sigma$. 
    It remains to show that $(x^*, y^*)$ is feasible in \lp. Constraints \ref{LP:x-nonneg} and \ref{LP:y-nonneg} are true by definition of $x^*, y^*$. 
    
    We now consider constraint \ref{LP:blocks}. If a block type $b \notin B^*$, then this constraint is satisfied because $x_{b,j} = 0$ for all $j$, and thus $x^*_{b,j} = 0$ for all $j$. Now we consider blocks that are in $B^*$. By Lemma \ref{lem:pseudo_forest} we know that each component of $G$ has at most one cycle. In the algorithm we remove an edge from each of these cycles, so the resulting graph is a forest. Thus each block type $i$ has one parent and so is a child of one job. This means that all $x_{i,j}$ variables associated with $i$ are rounded as $\lfloor 2x_{i,j}\rfloor$, except for the parent of $i$, $p_i$. So we obtain
    \begin{align*}
        \sum_j x^*_{i,j} = \sum_{j\neq p_i}\lfloor 2x_{i,j}\rfloor + \lceil 2x_{i, p_i}\rceil &\leq 2\sum_j x_{i,j} + 1
        \leq 2\sum_{\sigma_s \in C^*} m_{\sigma_s} \cdot a_{\sigma_s, i} + 1\\ 
        &\leq \sum_{\sigma_s \in C^*} (2m_{\sigma_s}+1) \cdot a_{\sigma_s, i} 
        \leq \sum_{\sigma \in C} y^*_{\sigma}\cdot a_{\sigma, i},
    \end{align*}
    where the second inequality follows from constraint \ref{LP':blocks} since $x$ is a feasible solution to $LP_f$, and the third inequality holds since $i \in B^*$ implying that there is at least one $\sigma_s \in C^*$ such that $a_{\sigma_s, i} > 0$. Thus, constraint \ref{LP:blocks} is satisfied.

    Finally we consider constraint \ref{LP:demand}. First we consider some job $j$ that is not a job whose edge was removed in the cycle. Then, since $G$ becomes a forest after pruning edges we obtain that either the children or the parent of $j$ satisfy at least half of its demand.  If its children satisfy at least half of its demand then we have $\sum_{\text{children of $j$}} \f ji \cdot x_{i,j} \geq \frac{1}{2}d_j$ and thus we obtain
    \[
    \sum_i \f ji \cdot x^*_{i,j} \geq \sum_{\text{children of $j$}} \f ji \lceil 2 x_{i,j}\rceil \geq 2\sum_{\text{children of $j$}} \f ji \cdot x_{i,j} \geq d_j,
    \]
    so the constraint is satisfied. Otherwise, its parent $p_j$ satisfies at least half of its demand implying that $x_{p_j, j} \geq \frac{1}{2}$ since we have $\f j {p_j} \leq d_j$ by our assumption on the input. Then, $x^*_{p_j, j} = \lfloor 2x_{p_j, j}\rfloor > x_{p_j, j}$, yielding $\sum_i x^*_{i,j}\cdot \f ji \geq \sum_i x_{i,j}\cdot \f ji \geq d_j$ since $x$ is a feasible solution to $LP_f$. So the constraint is satisfied. 
    
    Finally we consider any job $j$ that had an edge removed in the cycle.  Assume without loss of generality that $(j,b_2)$ was removed from the graph. Since $j$ is the root of the tree it was in (by line 13), all of its neighboring blocks are its children. Then, we have
    \begin{eqnarray*}
    \sum_i x^*_{i,j}\cdot \f ij = \sum_{i\neq b_2} \lceil 2x_{i,j}\rceil \cdot \f ji \geq 2x_{b_1, j}\cdot \f {b_1}j + \sum_{i\neq b_1, b_2} 2x_{i,j} \cdot \f ji\\
    \geq x_{b_1, j}\cdot \f {b_1}j + x_{b_2, j}\cdot \f {b_2}j + \sum_{i\neq b_1, b_2} 2x_{i,j} \cdot \f ji \geq \sum_i x_{i,j}\cdot \f ij \geq d_j.
    \end{eqnarray*}
    The third to last inequality comes as a consqequence of line 11 and the fact $(j,b_2)$ was removed from the graph. So the constraint is satisfied in all cases. Thus $(x^*, y^*)$ is a feasible integer solution to \lp.
    \qed
\end{proof}

\begin{lemma}
    \label{lem:pseudo_rounding_poly}
    The runtime of Algorithm~\ref{alg:GraphMatching} is polynomial if $|C| = O(1)$.
\end{lemma}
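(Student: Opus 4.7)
The plan is to bound both the number of iterations of each loop in Algorithm~\ref{alg:GraphMatching} and the work done inside each iteration. Since $|C| = O(1)$, the outermost \textbf{foreach} of line~3 traverses the powerset $P(C)$, contributing only $2^{|C|} = O(1)$ iterations. The nested \textbf{foreach} of line~4 ranges over tuples in $L^{|C^*|}$; by definition $|L| = O(\log_{1+\varepsilon}(\sum_j \d j))$, which is polynomial in the input size under the standard binary encoding of demands, and $|C^*| \le |C| = O(1)$, so $|L|^{|C^*|}$ is polynomial as well. Hence the two outer loops together execute polynomially many iterations.

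Next I would show that the body of each iteration does only polynomial work. The feasibility LP $LP_f$ has $O(n \cdot |B^*|) = O(nk)$ variables and polynomially many constraints, so it can be built and solved in polynomial time; an extreme-point optimal solution can be extracted in polynomial time via standard LP techniques (e.g., interior-point followed by a crossover to a basic solution). The graph $G$ has $|J| + |B^*| \le n + k$ vertices and at most $nk$ edges, so a single DFS suffices to identify its connected components and, within each, the unique cycle guaranteed by Lemma~\ref{lem:pseudo_forest}; the component loop of lines~9-13 therefore runs in polynomial time. The rounding loop of lines~14-16 visits each edge of the resulting forest once, setting each $y^*_{\sigma_i}$ in line~17 is $O(|C^*|)$, and comparing the total against the current best $Sol$ in line~18 is trivially polynomial. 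Combining these bounds, the total runtime is polynomial.

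The only subtle point is the bound $|L| = \poly(\text{input})$, which depends on demands being encoded in binary so that $\log(\sum_j \d j)$ is polynomial in the input length; fortunately this is the standard convention. Everything else is routine accounting of loop counts and per-iteration cost, so I do not anticipate a significant obstacle beyond making these observations precise.
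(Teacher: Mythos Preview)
Your proposal is correct and follows essentially the same approach as the paper's own proof: bound the outer loop by $2^{|C|}=O(1)$, bound the inner loop by $|L|^{|C^*|}$ with $|L|$ polynomial via the binary encoding of demands, and observe that solving $LP_f$, building $G$, and rounding are all polynomial in the input. Your version is in fact slightly more thorough than the paper's (you spell out extreme-point extraction and the DFS for cycle detection), but the structure and key observations are identical.
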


\begin{theorem}\label{thm:graph_match_approx}
Algorithm~\ref{alg:GraphMatching} gives a $\min \{(3+\varepsilon)\opt,(2+\epsilon)\opt + |C|\}$ approximation in polynomial time if $|C| = O(1)$.
\end{theorem}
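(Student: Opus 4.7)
The plan is to verify the three ingredients separately: polynomial runtime, feasibility, and the claimed cost bound. The first two are already packaged as Lemmas~\ref{lem:pseudo_rounding_poly} and~\ref{lem:feasible}, so the only real work is to show that, among the iterations the algorithm enumerates, at least one produces a schedule whose cost matches the claimed approximation.

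First I would identify a ``lucky'' iteration of the outer loops. Let $C^{\mathrm{opt}} \subseteq C$ be the set of configurations actually used in some optimal schedule, and let $\opt_\sigma$ denote the number of times $\sigma \in C^{\mathrm{opt}}$ appears in it, so that $\sum_{\sigma} \opt_\sigma = \opt$. Since $L = \{\lfloor (1+\varepsilon)^i\rfloor : 0 \le i \le \log_{1+\varepsilon}\sum_j d_j\}$, for every $\sigma$ I can pick $m_\sigma \in L$ with $\opt_\sigma \le m_\sigma \le (1+\varepsilon)\opt_\sigma$ (taking the next power of $1+\varepsilon$ above $\opt_\sigma$). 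Consider the iteration in which $C^* = C^{\mathrm{opt}}$ and the multiplicities are these $m_\sigma$; the algorithm examines this iteration by construction.

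Next I would argue the LP in this iteration is feasible. The optimal schedule itself provides an assignment $x^{\mathrm{opt}}_{i,j}$ of blocks to jobs satisfying every demand and using at most $\sum_\sigma \opt_\sigma \cdot a_{\sigma,i} \le \sum_\sigma m_\sigma \cdot a_{\sigma,i}$ blocks of each type $i \in B^*$, so $x^{\mathrm{opt}}$ is feasible for $LP_f$. Hence the algorithm enters the \textbf{if} branch on line~7 and reaches the rounding step, so by Lemma~\ref{lem:feasible} the output from this iteration is a feasible integer solution. The cost of this output is
\[
\sum_{\sigma \in C^*} y^*_\sigma \;=\; \sum_{\sigma \in C^*} (2m_\sigma + 1) \;\le\; 2(1+\varepsilon)\sum_{\sigma \in C^*} \opt_\sigma + |C^*| \;\le\; (2+2\varepsilon)\,\opt + |C|.
\]
Since $|C^*| \le |C^{\mathrm{opt}}| \le \opt$ (each used configuration contributes at least one machine), the same quantity is also bounded by $(3+2\varepsilon)\opt$. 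Rescaling $\varepsilon$ at the start by a factor of two yields the two claimed bounds simultaneously, and since Algorithm~\ref{alg:GraphMatching} returns the minimum-cost feasible solution it finds, the returned solution inherits both bounds.

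I do not expect serious obstacles here: feasibility, integrality, and the pseudo-forest rounding have all been isolated into earlier lemmas. The only care required is to make sure the ``lucky'' $C^*$ and $(m_\sigma)$ are in fact enumerated (they are, since $P(C)$ is the powerset and $L$ covers the relevant range up to a $(1+\varepsilon)$ factor), and to bookkeep the additive $|C^*|$ from the $+1$ in $y^*_\sigma = 2m_\sigma + 1$ so it converts cleanly into either $+|C|$ or an extra $\opt$.
\qed
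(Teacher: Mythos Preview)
Your argument matches the paper's approach almost exactly, but there is one genuine gap you need to patch. The algorithm contains a \textbf{Break} at the end of the \textbf{if} block, so for a fixed $C^*$ the inner loop terminates at the \emph{first} tuple $(m_{\sigma_1},\dots,m_{\sigma_{|C^*|}})$ for which $LP_f$ is feasible. Consequently your ``lucky'' tuple need not be examined at all, and the sentence ``since Algorithm~\ref{alg:GraphMatching} returns the minimum-cost feasible solution it finds, the returned solution inherits both bounds'' does not follow as written: the lucky solution may not be among the solutions it finds.

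The paper closes this gap by stipulating (implicitly in the pseudocode, explicitly in the proof) that the inner \textbf{foreach} enumerates $L^{|C^*|}$ in increasing order of $\sum_\sigma m_\sigma$. Then the first feasible tuple encountered for $C^*=C^{\mathrm{opt}}$ has $\sum_\sigma m_\sigma$ no larger than your lucky tuple's sum, and since the rounded cost is $\sum_\sigma y^*_\sigma = 2\sum_\sigma m_\sigma + |C^*|$, which is monotone in $\sum_\sigma m_\sigma$, the same cost bound $(2+2\varepsilon)\opt + |C^*|$ carries over. With this one addition (or, equivalently, by noting that removing the \textbf{Break} still leaves the runtime polynomial), your proof is complete and coincides with the paper's.
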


\begin{proof}
    Consider the iteration where $C^* = C^{\opt}$ where $C^{\opt}$ is the set of configurations used by an optimal integer solution. The algorithm will iterate through potential counts $m_\sigma$ for each $\sigma$ in $C^*$, round and return a schedule the first time $LP_{f}$ has a feasible solution; let $m_{\sigma_1}, ... m_{\sigma_{|C^*|}}$ be the $m$ values in this iteration.  By Lemma~\ref{lem:feasible}, the solution returned is feasible, and by Lemma~\ref{lem:pseudo_rounding_poly} the running time is polynomial.
    
    We now bound the cost by first arguing that $\sum_{\sigma_i} m_{\sigma_i} \leq (1+\varepsilon) \opt$.  Observe that the $y$ values in the optimal integer solution to \lp\ would yield a feasible solution to $LP_f$ if they equalled the corresponding $m$ values in $LP_f$ (namely by setting the $x$ variables in $LP_f$ to the $x$ values in the optimal integer solution to \lp). For each such $y_i$ value, consider $p_i$, the first power of $1+\varepsilon$ that is at least $y_i$. Then, we have $y_i \leq \lfloor p_i \rfloor \leq (1+\varepsilon)y_i$. Therefore, by definition of $L$, we will set values for the $m_{\sigma_i}$ such that they are greater than and within a factor of $(1+\varepsilon)$ of the $y$ values from the optimal integer solution. Thus they will be feasible, since they use at least as many of each configuration, and $\sum_{\sigma_i}m_{\sigma_i} \leq (1+\varepsilon)\opt$. Since we iterate through the $m$ values in increasing order of $\sum_{\sigma_i} m_{\sigma_i}$ we know that the first feasible solution will use at most this many configurations. 
    
    Now consider that the rounded solution $y^*$ has $ \sum_{\sigma} y^*_{\sigma} \le \sum_{\sigma_i} (2m_{\sigma_i} +1) = 2\sum_{\sigma_i} m_{\sigma_i} + |C^{OPT}| \leq 2(1+\varepsilon)\opt + |C^{OPT}|$. Since the optimal integer solution uses at least 1 of each configuration in $C^{OPT}$, we have that $\sum_{\sigma} y^*_{\sigma} \leq (3+\varepsilon)\opt$ and also that $\sum_{\sigma} y^*_{\sigma} \leq (2+\varepsilon)\opt + |C|$. \qed
\end{proof}

\section{\cms\ with $O(1)$ configurations of $O(1)$ size}
\label{sec:ptas_constantk}
In this section, we consider \cms\ with $n$ jobs, a set $C$ of a fixed number of configurations with the additional constraint that each configuration has at most a constant number $k$ of blocks.  Let $b$ be the total number of block types.  Since $|C|$ and $k$ are both constant, $b \le k|C|$ is a constant.  In Appendix~\ref{app:constant_k_constant_size}, we present an optimal dynamic programming algorithm for the problem, which takes time $(nkd_{\max})^{O(b+|C|)}$; this is pseudo-polynomial time for constant $b$ and $|C|$.  In the following, we present our main result of this section, a PTAS for the problem.  
\smallskip

\noindent {\sl Blocks and patterns.}
We number the block types 1 through $b$ and we use \textit{$p$-block} to refer to a block of type $p$.  We partition jobs into two groups: the \textit{large} jobs $L$ and \textit{small} jobs $S$. A job $j$ is small if there exists a configuration $\sigma$ such that $f_j(\sigma) \ge \varepsilon d_j$; otherwise, $j$ is large.  (Here we use $f_j(\sigma)$ to denote the total demand satisfied if every block in configuration $\sigma$ is assigned to $j$.)

Let $\eps > 0$ be a given constant parameter, and let $\lambda = \eps/(2k)$.
We define a \textit{pattern} $\pi$ to be a size $b$ list of integers $\pi_1$ through $\pi_b$ that sum to no more than $k/\lambda^2$; $\pi_p$ denotes the number of $p$-blocks in pattern $\pi$. Let $W$ be the set of all possible patterns.  So, $|W| \le (k/\lambda^2)^b$.  We assign each small job a \textit{type}. Job $j$ is of type $t \in 2^W$ if each pattern $\pi \in t$  is such that the demand of $j$ is satisfied if $j$ is allocated $\pi_i$ $i$-blocks for $1 \le i \le b$.  So, the number of job types is at most $2^{(k/\lambda^2)^b}$.  Define constant $\gamma = 2^{(k/\lambda^2)^b}$.  
\smallskip

\newcommand{\LPeps}{PTAS-LP}
\noindent {\sl The linear program.}
We define a linear program \LPeps\ using the following notation.
In \LPeps, $\sigma$ ranges over all possible configurations in $C$, $p \in \{1, \ldots, b\}$ ranges over types of blocks, $x_{j,p}$ is the number of $p$-blocks dedicated to processing a large job $j$, $y_{\sigma}$ is the number of machines we use with configuration $\sigma$, $\sigma_p$ is the number of $p$-blocks in $\sigma$ (this is a constant), $z_{t,\pi}$ is the number of small jobs of type $t$ that are distributed according to pattern $\pi$, and $n_t$ is the number of small jobs of type $t$.  Recall that $\pi_p$ is the $p$th entry of $\pi$.
\LPeps\ minimizes $\sum_{\sigma \in C} y_\sigma $ subject to the following constraints
\begin{align}
    &\textstyle\sum_{j \in L} x_{j,p} + \sum_{t \in 2^W} \ \sum_{\pi \in W} (z_{t,\pi} \cdot \pi_p) \le \sum_{\sigma} y_\sigma \cdot \sigma_p &&\forall p \in [b]  \label{LPeps:servers} \\
    &\textstyle\sum_{p \in [b]} f_j(p) \cdot x_{j,p} \ge d_j &&\forall j \in L \label{LPeps:execution-large} \\
    &\textstyle\sum_\pi z_{t,\pi} \ge n_t &&\forall t \in 2^W \label{LPeps:execution-small}\\
    &x_{j,p} \ge 0 &&\forall j \in L, p \in [b] \label{LPeps:x-nonneg} \\
    &y_{\sigma} \ge 0 &&\forall \sigma \label{LPeps:y-nonneg} \\
    &z_{t,\pi}  \ge 0 &&\forall t \in 2^W, \pi \in W \label{LPeps:z-nonneg}
\end{align}

\noindent\textbf{Constraints.}
Constraint~\ref{LPeps:servers} guarantees that the total number of blocks of type $p$ that are used to execute jobs is no omre than the total number of available blocks of type $p$. Constraint~\ref{LPeps:execution-large} guarantees that each large job is fully executed, and constraint~\ref{LPeps:execution-small} guarantees that each small job is fully executed.  Constraints~\ref{LPeps:x-nonneg} through~\ref{LPeps:z-nonneg} are non-negativity constraints.

\medskip
\noindent Lemma~\ref{lem:small_job} establishes that it is sufficient to consider schedules in which small jobs are executed by a bounded number of blocks. Lemma~\ref{lem:lpeps} shows that \LPeps\ is a valid relaxation for the problem.  We defer the proofs to Appendix~\ref{app:constant_k_constant_size}.

\begin{lemma}
\label{lem:small_job}
For any schedule with $m$ machines, there exists a schedule with $m(1+k \lambda)$ machines in which each small job is executed by at most $k/\lambda^2$ blocks.
\end{lemma}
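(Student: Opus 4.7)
The plan is to take an arbitrary schedule of $m$ machines and modify it so that every small job is served by at most $k/\lambda^2$ blocks, while growing the machine count by a factor of at most $1+k\lambda$. First I would call a small job $j$ \emph{heavy} if it is currently served by more than $k/\lambda^2$ blocks in the given schedule, and \emph{light} otherwise. Since each machine contains at most $k$ blocks, the total number of block slots across all $m$ machines is at most $mk$, so the number of heavy small jobs is at most $mk/(k/\lambda^2) = m\lambda^2$.

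Next, for each heavy small job $j$ I would unassign $j$ from every block it occupies in the original schedule (the affected blocks can simply be left idle, which cannot hurt feasibility for any other job), and add fresh dedicated machines for $j$ as follows. By the definition of a small job, there exists a configuration $\sigma_j \in C$ with $f_j(\sigma_j) \ge \varepsilon d_j$. Allocating $\lceil d_j / f_j(\sigma_j) \rceil \le \lceil 1/\varepsilon \rceil$ fresh machines with configuration $\sigma_j$, each assigning all of its blocks to $j$, satisfies $j$'s demand in full. The total number of blocks now serving $j$ is at most $k\lceil 1/\varepsilon\rceil \le 2k/\varepsilon = 1/\lambda$, which is at most $k/\lambda^2$ since $\lambda \le k$. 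Light small jobs are unchanged, so they also continue to meet the $k/\lambda^2$ bound, and large jobs are untouched.

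Finally I would tally the added machines. The total increase is at most (number of heavy small jobs) times (machines added per heavy job), which is at most $m\lambda^2 \cdot \lceil 1/\varepsilon\rceil \le m\lambda^2 \cdot 2/\varepsilon = m\lambda/k \le mk\lambda$, using $\lambda = \varepsilon/(2k)$ and $k \ge 1$. The modified schedule therefore has at most $m + mk\lambda = m(1+k\lambda)$ machines, remains feasible (no demand is ever reduced, only shifted to dedicated machines that fully satisfy it), and each small job is served by at most $k/\lambda^2$ blocks, as required.

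The main obstacle is essentially bookkeeping: chaining the inequalities $\lceil 1/\varepsilon\rceil \le 1/(k\lambda)$, $m\lambda^2 \cdot 1/(k\lambda) = m\lambda/k \le mk\lambda$, and $1/\lambda \le k/\lambda^2$, all of which rely on the definition $\lambda = \varepsilon/(2k)$ and the mild assumption $\varepsilon \le 2k^2$. The key conceptual point is that being small gives a single configuration that already absorbs an $\varepsilon$ fraction of the demand, so only $O(1/\varepsilon)$ dedicated machines per heavy job suffice, and the count of heavy jobs is controlled by a simple slot-counting argument.
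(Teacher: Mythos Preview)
Your proposal is correct and follows essentially the same approach as the paper: count the heavy small jobs via a slot argument, then re-serve each heavy job on $O(1/\varepsilon)$ dedicated machines of a configuration witnessing smallness. Your block-slot count ($H\le m\lambda^2$) and your use of $\lceil 1/\varepsilon\rceil$ machines per heavy job are in fact slightly sharper than the paper's corresponding steps (which count machine slots to get $H\le mk\lambda^2$ and allocate $1/\lambda$ machines per heavy job), but the structure of the argument is identical.
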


\begin{lemma}
\label{lem:lpeps}
    The value of \LPeps\ is at most $(1+ k \lambda)\opt$.
\end{lemma}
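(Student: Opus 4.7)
The plan is to exhibit an explicit feasible solution to \LPeps\ whose objective value matches the number of machines in a suitably modified optimal schedule; Lemma~\ref{lem:small_job} then supplies the $(1+k\lambda)$ factor and the result follows.

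Starting from an optimal integer schedule with $\opt$ machines, I would invoke Lemma~\ref{lem:small_job} to obtain a schedule $S$ with at most $(1+k\lambda)\opt$ machines in which every small job is executed by at most $k/\lambda^2$ blocks. For each small job $j$, let $\pi(j)$ denote the pattern that records how many blocks of each type $S$ allocates to $j$: by construction $\sum_p \pi(j)_p \le k/\lambda^2$, so $\pi(j)\in W$, and since $S$ satisfies $j$'s demand we have $\pi(j) \in t_j$, where $t_j\subseteq W$ is $j$'s type (the set of patterns in $W$ that satisfy $d_j$).

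I would then define the LP variables directly from $S$: set $y_\sigma$ to be the number of machines of configuration $\sigma$ in $S$; set $x_{j,p}$ to be the number of $p$-blocks that $S$ assigns to large job $j$; and set $z_{t,\pi}$ to be the number of small jobs of type $t$ with $\pi(j)=\pi$. All variables are nonnegative integers, so constraints~\ref{LPeps:x-nonneg}--\ref{LPeps:z-nonneg} hold trivially. Constraint~\ref{LPeps:execution-large} follows because $S$ executes every large job. Constraint~\ref{LPeps:execution-small} follows because each small job of type $t$ contributes to exactly one $z_{t,\pi}$, giving $\sum_\pi z_{t,\pi}=n_t$. Constraint~\ref{LPeps:servers} follows from a simple double count: the left-hand side is exactly the total number of $p$-blocks that $S$ assigns to jobs (large jobs via $x_{j,p}$, small jobs via $z_{t,\pi}\cdot \pi_p$), while the right-hand side is the total number of $p$-blocks supplied by the machines of $S$.

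The objective value of this solution is $\sum_\sigma y_\sigma$, which equals the number of machines in $S$ and is therefore at most $(1+k\lambda)\opt$. Since we have exhibited a feasible solution of this cost, the LP optimum is no larger. The only thing that could go wrong is if a small job's assignment in $S$ failed to correspond to a pattern in $W$, i.e.\ used more than $k/\lambda^2$ blocks; but this is exactly the conclusion of Lemma~\ref{lem:small_job}, so once that lemma is in hand the argument goes through with no further obstacle.
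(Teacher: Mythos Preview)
Your proof is correct and follows essentially the same approach as the paper: apply Lemma~\ref{lem:small_job} to obtain a modified schedule in which every small job uses at most $k/\lambda^2$ blocks, then read off $x$, $y$, and $z$ directly from that schedule and verify feasibility. In fact your version is slightly cleaner than the paper's, which sets $y_\sigma$ from the original optimal placement $A$ rather than the modified placement $B$ (almost certainly a typo, since the block-supply constraint~\ref{LPeps:servers} must be checked against the machines of $B$).
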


\begin{algorithm}[htb]
    \KwInput{$(C,f,d)$}
    Solve \LPeps; let $(x, y, z)$ be the solution computed.\;
    \If{$n \le k(|C| + \gamma)/\lambda$}{Compute and return an optimal solution using enumeration}
    \ForEach{large job $j$ and block type $p$}{
    $\widehat{x}_{j,p} = \lceil x_{j,p} \rceil$; 
    Assign $\lceil x_{j,p} \rceil$ blocks of type $p$ to job $j$\;
    }
    \ForEach{job type $t$ and pattern $\pi$}{
    Assign blocks per pattern $\pi$ to each job in $\lceil z_{t,\pi} \rceil$ small jobs of type $t$
    }
    \ForEach{configuration $\sigma$}{
    Use $\lceil y_{\sigma} \rceil$ machines with configuration $\sigma$\;
    }
\caption{Schedule for $O(1)$ configurations of $O(1)$ size}
\label{alg:PTAS}
\end{algorithm}

\begin{theorem}
Algorithm~\ref{alg:PTAS} computes a $(1 + \eps)$-approximation in polynomial time.
\end{theorem}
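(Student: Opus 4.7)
The plan is to combine Lemma~\ref{lem:lpeps}, which gives $\sum_\sigma y_\sigma \le (1+k\lambda)\opt = (1+\eps/2)\opt$, with a careful accounting of the additive error from rounding each LP variable up to the next integer. Rounding $y_\sigma$ costs at most $|C|$ extra machines, while rounding $x_{j,p}$ and $z_{t,\pi}$ creates excess block demand on the right-hand side of constraint~\ref{LPeps:servers} which must be paid for by allocating a small number of additional machines (beyond the $\sum_\sigma \lceil y_\sigma\rceil$ scheduled by the algorithm) to restore block-supply feasibility. Showing each of these contributions is $O(\eps)\opt$ yields a $(1+O(\eps))$-approximation, which becomes $(1+\eps)$ after rescaling $\eps$ by a constant at the outset.

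I would first dispose of the small case $n \le k(|C|+\gamma)/\lambda$: here $n$, $k$, $b$, $|C|$, and $\gamma = 2^{(k/\lambda^2)^b}$ are all bounded by constants depending only on $\eps$, so the problem reduces to an integer program in constant dimension (one variable per configuration-and-block-to-job assignment, with one demand constraint per job), which is solvable exactly in polynomial time via, e.g., Lenstra's algorithm for ILP in fixed dimension. In the remaining case $n > k(|C|+\gamma)/\lambda$, the normalization $\min_{i:f_j(i)\neq 0} f_j(i) = 1$ implies each job consumes at least one block and each machine holds at most $k$ blocks, so $\opt \ge n/k > (|C|+\gamma)/\lambda$. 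This slack suffices to absorb the $|C|$ from rounding $y$ (bounded by $\lambda\opt = (\eps/2k)\opt$) and the at most $\gamma\cdot|W|\cdot k/\lambda^2$ excess blocks per type created by rounding $z$, which is a constant in $\eps$, $k$, $b$ and therefore also $O(\lambda\opt) = O(\eps\opt)$.

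The main obstacle is bounding the contribution from rounding $x_{j,p}$, which creates up to $|L|$ excess blocks per type. Here I would exploit the defining property of large jobs: for every $j \in L$, no single configuration satisfies $\eps d_j$ of $j$'s demand, so every machine contributes strictly less than $\eps d_j$ to $j$ and therefore at least $\lceil 1/\eps\rceil$ distinct machines must touch $j$. Each machine has at most $k$ blocks and hence touches at most $k$ distinct jobs, so counting (machine, large-job) incidences two ways yields $|L|/\eps \le k\cdot\opt$, i.e.\ $|L|\le \eps k\cdot\opt$. Since $b \le k|C|$ is a constant, the extra machines required to cover all excess block demand sum to at most $b(|L| + \gamma|W|k/\lambda^2) + |C| = O(\eps)\opt$, so the algorithm outputs a feasible schedule of cost $(1+\eps/2)\opt + O(\eps\opt) = (1+\eps)\opt$ after rescaling. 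Finally, the LP has only constantly many variables and constraints once $k$, $b$, $|C|$, and $\gamma$ are treated as constants, so both LP solution and rounding run in polynomial time.
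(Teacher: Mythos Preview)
Your plan mirrors the paper's proof exactly: handle small $n$ directly, and for large $n$ solve \LPeps, round everything up, and charge the three sources of overflow ($x$, $z$, $y$) against $\opt$ using the lower bound $\opt \ge n/k$.  Your double-counting bound $|L| \le \eps k\,\opt$ is precisely what the paper uses (phrased there as ``each large job requires many machines''), and invoking Lenstra's algorithm for the constant-$n$ case is a legitimate way to make the paper's ``enumeration'' rigorous.

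Two quantitative points need tightening.  First, the LP does \emph{not} have constantly many variables: there are $|L|\cdot b$ variables $x_{j,p}$, and $|L|$ may be linear in $n$.  The LP is still polynomial-size, so the running-time conclusion survives, but the sentence as written is false.  Second, and more substantively, the additive block surplus from rounding the $z$-variables is on the order of $\gamma\,|W|\,k/\lambda^2$, and this is \emph{not} $O(\lambda\,\opt)$ under the threshold $\opt > (|C|+\gamma)/\lambda$: since $|W| \le (k/\lambda^2)^b$ itself grows polynomially in $1/\eps$, the ratio
\[
\frac{\gamma\,|W|\,k/\lambda^2}{\lambda\,\opt}\;\ge\;\frac{\gamma\,|W|\,k/\lambda^2}{|C|+\gamma}\;\ge\;\frac{|W|\,k}{2\lambda^2}
\]
blows up as $\eps\to 0$, and no ``rescaling of $\eps$ by a constant'' repairs this because the constant you would need depends on $\eps$ itself.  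The fix is simple and local: enlarge the small-case threshold to $n \le k\cdot C_{\mathrm{tot}}/\lambda$, where $C_{\mathrm{tot}}$ is the full additive constant arising from rounding $z$ and $y$ (still a constant depending only on $k$, $|C|$, $\eps$); then in the large case $\opt \ge C_{\mathrm{tot}}/\lambda$ and the additive error is genuinely at most $\lambda\,\opt$.  (The paper's own proof finesses the same point by asserting the $z$-rounding cost is at most $\gamma$, without justification.)
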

\begin{proof}
First, if $n \le k(|C|+\gamma)/\lambda$, then the algorithm returns an optimal solution.  Otherwise, since each machine has at most $k$ blocks, we obtain that $\opt \ge (|C|+\gamma)/\lambda$.  We will show that the number of machines used is at most $(1 + k\lambda)\opt + \lambda^2 k \opt + |C| + \gamma$, which is at most $(1 + 2k\lambda)\opt = (1 + \eps)\opt$.

Rounding up the $x$ variables increases the number of blocks by at most the number of large jobs times the number of block types.  Since each large job requires at least $1/\lambda^2$ machines, this increase in the number of blocks is at most $\lambda^2 k \opt$.  Rounding up the $z$ variables adds at most $1/\lambda^2$ blocks per small job type assigned to a given pattern.  This increases the number of blocks by at most $\gamma$.  Rounding up the $y$ variables increases the number of machines by $|C|$.  Taken together with the above increase in the number of blocks, each of which requires at most one machine, we find that the total increase is bounded by $\lambda^2 k \opt + \gamma + |C|$.  By Lemma~\ref{lem:lpeps}, the LP optimal is at most $(1 + k \lambda) \opt$, yielding the desired claim.

The linear program \LPeps\ has at most $nb + |C| + \gamma \log_\gamma$ variables and $b + n + \gamma$ linear constraints (other than the non-negativity ones), and can be solved in polynomial time.  The enumeration for $n \le k(|C|+\gamma)/\lambda$ is constant time, while the rest of the algorithm is linear in the number of variables.  The hidden constant, however, is doubly exponential in the number of configurations $|C|$ and the configuration size bound $k$, and exponential in $1/\eps$. 
\qed
\end{proof}

\junk{
\noindent Thus, if we have a $(1+k\varepsilon)OPT + k$ approximation (the exact additive factor might be different but still constant) then we can run the above first with $X = k\frac{1}{\varepsilon}$. If it finds a solution then return that, and that is exact. If not, then we know $OPT > \frac{k}{\varepsilon}$. Then our algorithm will return a solution that is $(1+k\varepsilon)OPT + \varepsilon OPT = (1+(k+1)\varepsilon)OPT$ approximate. This then can be made $(1+\varepsilon)OPT$ by scaling $\varepsilon$ by $\frac{1}{k+1}$, which is allowed since $k$ is constant.}

\junk{
\subsection{Optimal algorithm for constant number of job types}
\paragraph{Model.}
The number of machines $m$ is unbounded. The number of available blocks sizes $L$ is constant. The number $d$ of different demands and demand functions is constant. For each job type $j$, there is also an associated multiplicty $a_j$ indicating how many jobs we are given of that type. The objective is to pack all jobs into the smallest number of machines possible. 

\begin{theorem}
    There is an optimal polynomial time algorithm for this problem. 
\end{theorem}
We define 
\[  P = \{ (x,1) \in \mathbb{R}^{dL+1}_{\ge 0} : \sum_{i,j} i \cdot x_{i,j} \le k\} \text{and}\]
\[Q = \{ (x,b) \in \mathbb{R}^{dL+1}_{\ge 0} : \forall j,\ \sum_i i \cdot x_{i,j} \ge d_j \cdot a_j;\ 0 \le b \le B\}.  \]
Here $x_{i,j}$ is the number of $i$-blocks which execute a job of type $j$ on the current machine.
The theorem them follows from Theorem 2 in \cite{GoemansRothvoss.BinPacking.13}.
}

\bibliographystyle{splncs04}
\bibliography{refs}

\appendix
\section{General \cms}
\label{sec:cms_appendix}

\begin{proof}[Proof of Lemma~\ref{lem:cms_hardness}]
Consider an arbitrary instance $I$ of multiset multicover. Let ${\cal U}$ denote the set of elements and ${\cal C}$ the collection of multisets in the set cover instance.  Let $r_e$ denote the coverage requirement for element $e$.  We can assume without loss of generality that there do not exist two multisets $S_1$ and $S_2$ with $S_1 \subseteq S_2$, since we can eliminate $S_1$ from the set collection otherwise.  We construct an instance of \cms\ where each multiset $S$ is a configuration and each element $e$ is both a block type and a job.  The job $e$ has demand $r_e$, which can only be satisfied by $r_e$ blocks of type $e$.  

Any multiset multicover solution, given by a collection $M$ of multisets, corresponds to a solution for \cms: each multiset $S$ in $M$ is a machine configured according to $S$. Therefore, the number of multisets in $M$ is the same as the number of machines in the \cms\ solution. 
Furthermore, since each element $e$ is covered $r_e$ times in $M$, it follows that each job $e$ has $r_e$ occurrences of block type $e$ included in \cms\ solution, thus satisfying the demand for $e$.  Similarly, every \cms\ solution with $m$ machines is a collection of $m$ multisets, with each multiset corresponding to the configuration of a machine.  Since the objective function value achieved by each of the two solutions is identical, the reduction is approximation-preserving.   
\qed
\end{proof}

The multiset multicover problem is as hard as set cover, which is NP-hard to approximate to within a factor of $(1 - \eps) \ln n$ for every $\eps > 0$~\cite{dinur+s:repetition}, where $n$ is the number of element. 
We thus obtain the same hardness for \cms\ where $n$ is the number of jobs.

\begin{proof}[Proof of Lemma~\ref{lem:cms_polytime}]
    Constraints \constraints\ consist of $n \cdot k \cdot |C|$ variables and $k + n + nk + |C|$ inequalities, and so can be solved in polynomial time. The polynomial runtime of Algorithm~\ref{alg:multisetmulticover_reduction} follows from \cite{rajagopalan+v:cover}, and the fact that our reduction to multi-set multi-cover is polynomial time.

    Algorithm~\ref{alg:throughput} executes in $O(|C|(n + k))$ time, so it remains to show only that the number of iterations in Algorithm~\ref{alg:greedy} is polynomial. Note that, in each iteration, there is some job $n$ and some block type $k$ such that the amount of $n$'s remaining demand that can be satisfied by scheduling a block of type $k$ is reduced by some amount. We also note that once this amount has been reduced, scheduling another block of type $k$ satisfies the remaining demand of $n$. So the maximum number of reductions is at most $2nk$. This proves the lemma.
    \qed
\end{proof}

\begin{proof}[Proof of Lemma~\ref{lem:LP}]
    Consider an arbitrary schedule $S$. We set $\y\sigma$ to the number of machines on which $S$ uses configuration $\sigma$. For each job $j$ and each block $i$, we set $\x ij$ to the number of blocks of type $i$ on which $S$ executes job $j$. Constraints \constraints\ follow straightforwardly from this assignment.
    \qed
\end{proof} 


\begin{proof}[Proof of Lemma~\ref{lem:throughput}]
    Let $\mu$ be the machine returned by Algorithm~\ref{alg:throughput} and let $\sigma$ be the configuration used by $\mu$. We show that the maximum throughput machine $\mu^*$ over $\sigma$ has throughput no more than twice that of $\mu$.

    We order the blocks of $\mu$ by the order in which Algorithm~\ref{alg:throughput} allocates them. For each job $j$, and each block type $i$, we define
    \begin{center}\begin{tabular}{c}
        $u_j = \min\{\sum_{i:\mu(i)=j} \f ji, \d j\}$ \quad and \quad $u^*_j = \min\{\sum_{i:\mu^*(i)=j} \f ji, \d j\}$  \\ \\[-0.5em]
        $v_i = \min\{\f {\mu(i)}i,\ u_{\mu(i)} - \sum_{i< i: \mu(i') = \mu(i)} v_{i'}\}$  \\ \\[-0.5em]
        $v^*_i = \min\{\f {\mu^*(i)}i,\ u_{\mu^*(i)} - \sum_{i< i: \mu^*(i') = \mu^*(i)} v^*_{i'}\}$ \\ \\[-0.5em]
        $w^*_i = \min\{\f {\mu^*(i)}i - v^*_i,\ u^*_{\mu^*(i)} - \sum_{i< i: \mu^*(i') = \mu^*(i)} w^*_{i'} + v^*_{i'}\}$
    \end{tabular}\end{center}
    These entail that $\sum_{i} v_i = \sum_j u_j$, and $\sum_{i} v^*_i \le \sum_j u_j$, and $\sum_{i} w^*_i + v^*_i = \sum_j u^*_j$. Informally, $v_i$ represents the increase in total throughput when Algorithm~\ref{alg:throughput} allocates block $i$, and $v^*_i$ (resp. $w^*_i)$ represents the throughput on $i$ of $\mu^*$ that is (resp. not) satisfied by $\mu$.   
    
    Since $\sum_i v^*_i \le \sum_j u_j$, it is sufficient to show that $\sum_i w^*_i \le \sum_i v_i$.  Suppose that, for some $i$,  $w_i > v_i$. Since $w_i$ represents demand not satisfied by $\mu$, and since Algorithm~\ref{alg:throughput} greedily chooses the block with the highest throughput, Algorithm~\ref{alg:throughput} would have assigned job $\mu^*(i)$ to block $i$ instead of job $\mu(i)$.  This yields a contradiction, which proves the lemma. 
    \qed
\end{proof}

The following lemma from Rajagopalan and Vazirani \cite{rajagopalan+v:cover} provides an approximation guarantee for multi-set multi-cover.

\begin{lemma}[Theorem 5.1 in \cite{rajagopalan+v:cover}]
    An instance of multi-set multi-cover consists of a universe $U$ of multiplicity-element pairs $(a,i)$ and a collection $S$ of multi-sets of elements $i$.
    The objective is to cover the whole multiplicity of elements with the minimum number of multi-sets.
    There exists a polynomial time greedy algorithm for multi-set multi-cover with approximation ratio $\log(|U| \cdot \max_{S' \in S} |S'|)$. 
\label{lem:multiset_multicover}
\end{lemma}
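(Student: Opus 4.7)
The plan is to analyze the natural greedy algorithm, which at each iteration selects the multi-set whose intersection with the currently-uncovered demand is largest. This is the direct generalization to multi-set multi-cover of the classical Chv\'atal--Johnson--Lov\'asz analysis of greedy set cover, and the approximation bound follows from a dual-fitting / charging argument.

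First, I would formalize the algorithm. Maintain a residual requirement $r_i$ for each element $i$, initialized to the multiplicity $a$ paired with $i$ in $U$. For a multi-set $T \in S$ containing $m_T(i)$ copies of $i$, define its residual coverage as $\mathrm{cov}(T) = \sum_i \min\{m_T(i), r_i\}$. At each step greedy picks the $T$ maximizing $\mathrm{cov}(T)$ and decrements residuals accordingly. Termination is in at most $\sum_i a_i$ iterations, each of polynomial cost, so the overall runtime is polynomial.

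Second, set up dual-fitting prices. Each unit of demand is an (element, copy) pair $(i,\ell)$ with $1 \le \ell \le a_i$. When the greedy step that selects $T$ first covers such a pair, assign it price $1/\mathrm{cov}(T)$. By construction, the number of multi-sets chosen equals the sum of all prices.

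Third, bound the price sum per multi-set in an optimal solution. Fix any $T \in S$ of total size $|T| = \sum_i m_T(i)$, and order its $|T|$ units by the time greedy first covers each. At the moment the $j$-th of these units gets priced, $T$ itself still has at least $|T|-j+1$ residual units, so greedy's actual choice had coverage at least $|T|-j+1$ (since $T$ was available to be picked). After a short grouping argument handling the case in which greedy covers several units of $T$ in the same step, this yields $\sum_{\text{units of }T} p \le H_{|T|}$. Summing over the multi-sets of an optimum cover, whose union serves every unit of demand, gives total price at most $\opt \cdot H_{\max_{S'\in S}|S'|}$, which is $O(\log(|U|\cdot \max_{S' \in S}|S'|)) \cdot \opt$ as claimed.

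The main obstacle is precisely the grouping step inside step three: in classical set cover each greedy iteration covers at most one unit per element, so the $j$-th element of $T$ can be charged directly to the harmonic term $1/(|T|-j+1)$, but in the multi-set setting one greedy iteration may cover several units of a fixed $T$ at once, and all of them share the common price $1/\mathrm{cov}(\text{greedy's pick})$, which may individually exceed $1/(|T|-j+1)$. The remedy is the elementary inequality that if $r$ units share a common price $1/m$, then their aggregate contribution $r/m$ is bounded above by $\tfrac{1}{m}+\tfrac{1}{m-1}+\cdots+\tfrac{1}{m-r+1}$, so the harmonic bound survives summation. Once this is in place, the standard Chv\'atal argument of charging each covered unit to some multi-set of an optimum cover delivers the stated approximation ratio.
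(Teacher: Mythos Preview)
The paper does not prove this lemma at all; it simply quotes it as Theorem~5.1 of Rajagopalan--Vazirani and uses it as a black box inside the analysis of Algorithm~\ref{alg:multisetmulticover_reduction}. So there is no in-paper proof to compare against.

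Your sketch is the standard Chv\'atal/Johnson/Lov\'asz dual-fitting argument adapted to multi-sets, and it is correct. The one nonstandard step you flag---that when a single greedy iteration covers $r$ units of a fixed optimal multi-set $T$ at common price $1/m$, we still have $r/m \le \sum_{\ell=0}^{r-1} 1/(m-\ell)$---is valid because at that moment $T$ itself had residual coverage at least $r$ (indeed at least $|T|-j+1\ge r$), so $m\ge |T|-j+1$ and each harmonic term dominates $1/m$. Summing over the units and then over the optimal multi-sets gives the $H_{\max_{S'\in S}|S'|}$ bound, which is at most the $\log(|U|\cdot \max_{S'\in S}|S'|)$ ratio stated in the lemma. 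This is essentially how the cited reference proceeds as well.
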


We provide further analysis of Algorithm~\ref{alg:greedy}, which could be applied on its own to achieve an $O(\log \sum_j \d j)$ approximation. 
The following lemma shows that our analysis of Algorithm~\ref{alg:greedy} is tight. 

\begin{lemma}
    There exist a family of instances with $n$ jobs, $k$ block types, and configuration set $C$ such that, when applied on its own, Algorithm~\ref{alg:greedy} produces a schedule of length $\Omega(\log \sum_j d_j)$ and $\Omega(\sqrt k)$ and $\Omega(n)$.
\end{lemma}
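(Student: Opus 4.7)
The plan is to exhibit a single family of instances, parameterized by a positive integer $t$, with $n = \Theta(t)$ jobs, $k = \Theta(t^2)$ block types, and total demand $D = \sum_j d_j = \Theta(2^t)$, on which an optimal schedule uses $O(1)$ machines but Algorithm~\ref{alg:greedy} uses $\Omega(t)$ machines. Since $t = \Theta(n) = \Theta(\sqrt k) = \Theta(\log D)$ by construction, this single family simultaneously witnesses all three lower bounds.

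The construction adapts the classical tight example for greedy set cover. The key enabling observation is that Lemma~\ref{lem:throughput} only guarantees that each greedy-chosen machine achieves at least half the maximum residual throughput, so if we can force the residual demand to drop by only a factor of roughly two per iteration, Algorithm~\ref{alg:greedy} takes $\Omega(\log D) = \Omega(t)$ steps. Concretely, I would use $t$ jobs with geometrically decreasing demands $d_j = 2^{t-j}$, an ``optimal'' configuration $\sigma^\star$ of size $\Theta(t)$ whose blocks can be matched to the $t$ jobs so that a single machine satisfies every demand, and a chain of $t$ decoy configurations $\sigma_1,\dots,\sigma_t$, each built from a disjoint set of $\Theta(t)$ dedicated block types (yielding $k = \Theta(t^2)$). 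The demand-table entries of each $\sigma_i$ are tuned so that at the $i$th iteration of greedy the decoy's throughput on the residual instance exceeds that of $\sigma^\star$ by an infinitesimal margin, yet $\sigma_i$ individually only reduces the residual demand by a factor of two.

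The main obstacle is that, unlike in pure set cover, the throughput of every machine is capped by residual demand, so decoys cannot be made arbitrarily attractive by brute over-coverage. Each $\sigma_i$ must genuinely beat $\sigma^\star$ on the residual instance at exactly the correct iteration without exceeding any job's remaining demand, and simultaneously the batching rule that defines $a^*$ in Algorithm~\ref{alg:greedy} must commit only a single copy per iteration (otherwise greedy short-circuits the whole chain in one batch). Coupling these two constraints to the geometric demand sequence is the heart of the construction; the remainder of the argument is a routine induction on $i$ that tracks the residual demand vector and verifies that Algorithm~\ref{alg:throughput} selects $\sigma_i$ at iteration $i$, after which the residual demand has been halved and the argument repeats with $\sigma_{i+1}$.
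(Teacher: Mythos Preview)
Your plan contains a structural obstruction you have not resolved. You stipulate that a single machine configured as $\sigma^\star$ ``satisfies every demand.'' Then on any residual instance the throughput of the best machine over $\sigma^\star$ equals the total remaining demand $\sum_j D_j$, which is the maximum any machine can achieve. Consequently no decoy $\sigma_i$ can have strictly higher throughput than $\sigma^\star$; and if a decoy ties $\sigma^\star$, it too clears all remaining demand rather than merely halving it. The two properties you require of each decoy---beating $\sigma^\star$ and removing only half the residual demand---are therefore mutually exclusive once $\sigma^\star$ is a one-machine optimum. One could attempt to exploit that Algorithm~\ref{alg:throughput} is only a $2$-approximation and engineer $\sigma^\star$ so that the greedy block-to-job rule under-reports its throughput, but you propose no such mechanism, and coupling it with the batching constraint on $a^*$ that you correctly flag is far from routine.

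The paper's construction avoids all of this with a much lighter device: only \emph{two} configurations and $k=n+1$ block types. The ``good'' configuration $\{1,\ldots,n\}$ needs \emph{two} machines to cover everything (block $\ell$ supplies only $2^{\ell-1}$ of job $j_\ell$'s demand $2^{\ell}$), so one copy has throughput $2^n-1$. The single-block configuration $\{k\}$ fully satisfies whichever job currently has the largest remaining demand, giving throughput $2^n$; Algorithm~\ref{alg:throughput} therefore prefers it by a margin of one unit, finishes that job, and the identical picture recurs on the remaining $n-1$ jobs. There is no decoy chain, no $\Theta(t^2)$ disjoint block families, and $a^*=1$ holds automatically because each chosen machine completes exactly one job. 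The idea you are missing is that the optimal configuration should be slightly \emph{under}-powered on a per-machine basis ($\opt=2$, not $1$); that slack is precisely what lets a myopic single-block configuration win the throughput comparison while remaining globally wasteful.
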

\begin{proof}
Define $k$ and $C$ for a given number of jobs $n$. 
Set $k = n+1$. 
There are two allowed configurations: $\set{k}$ which has one block of type $k$ and $\{1, 2, 3, \ldots, n\}$ which has $n$ blocks of types 1 through $n$. Jobs are indexed 1 through $n$. The demand of job $j_\ell$ is $\d {j_\ell} = 2^\ell$. We define $\f {j_\ell}i = 0$ when $i \ne \ell,k$, and $\f {j_\ell}{\ell} = 2^{\ell-1}$, and $\f {j_\ell}{k} = 2^\ell$.

\textbf{Opt.} Executes all jobs on two machines using configuration $[1,2,\ldots,n]$. 

\textbf{Alg.} Executes all jobs on $n$ machines using configuration $[k]$. 

So the approximation ratio for this family of instances is a factor of $n \approx \sqrt k \approx \log \sum_j d_j$.
\qed
\end{proof}

\section{\cms\ with a fixed number of configurations}
\label{app:constant_C}
\begin{lemma}
    \label{lem:cms_fixed_configs_hardness}
    \cms\ with a fixed number of configurations is hard to approximate to within a factor of 2.
\end{lemma}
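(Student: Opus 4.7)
The plan is to give an approximation-preserving reduction from the \textsc{Partition} problem, which asks, given positive integers $a_1,\ldots,a_n$ with $\sum_i a_i = 2T$, whether there is a subset $S \subseteq [n]$ with $\sum_{i \in S} a_i = T$. Since \textsc{Partition} is NP-hard, any polynomial-time algorithm that distinguishes \cms\ instances with $\opt = 1$ from those with $\opt \ge 2$ would solve \textsc{Partition}; this already yields a hardness factor strictly less than $2$, and an asymptotic argument using parallel copies (described below) boosts it to $2$.

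Given a \textsc{Partition} instance, I construct a \cms\ instance with block set $B = \{1,\ldots,n\}$, a single configuration $C = \{\sigma\}$ where $\sigma$ contains exactly one block of each type in $B$, and two jobs $j_1, j_2$ with $d_{j_1} = d_{j_2} = T$ and $f_{j_r}(i) = a_i$ for $r \in \{1,2\}$ and $i \in B$. (One may assume $a_i \le T$ for all $i$, which is a standard preprocessing assumption for \textsc{Partition}, and this respects the paper's convention $\max_i f_j(i) \le d_j$.) If the \textsc{Partition} instance is a YES instance with witness $S$, a single machine with configuration $\sigma$ suffices: assign block $i$ to $j_1$ for $i \in S$ and to $j_2$ for $i \notin S$. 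Conversely, any one-machine schedule induces a partition of $B$ into the blocks routed to $j_1$ and those routed to $j_2$; for each job's demand to be met one needs $\sum_{i \in S_1} a_i \ge T$ and $\sum_{i \notin S_1} a_i \ge T$, which forces $\sum_{i \in S_1} a_i = T$ exactly. Finally, two machines trivially suffice in the NO case: route all blocks on machine $1$ (resp.\ machine $2$) to $j_1$ (resp.\ $j_2$). Hence $\opt = 1$ in the YES case and $\opt = 2$ in the NO case, so distinguishing a $<2$-approximation would solve \textsc{Partition}.

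To promote this $1$ versus $2$ gap into a full factor-of-$2$ hardness, I take $M$ disjoint copies of the reduction, one per each of $M$ independent \textsc{Partition} instances, placing them in a single \cms\ instance whose block types are the disjoint union and whose $M$ configurations correspond to the $M$ copies (or one can use a single combined configuration by grouping block types). This gives $\opt = M$ when all copies are YES and $\opt \ge 2M$ when any copy is NO, preserving the factor-$2$ gap even asymptotically; a standard padding argument then yields NP-hardness of approximating within any factor less than $2$.

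The main obstacle is the verification that no one-machine schedule can circumvent the partition requirement. Because a machine consists of exactly one block of each type and each block is integrally assigned to a single job, the demand-satisfaction inequalities tighten into the exact equality $\sum_{i \in S_1} a_i = T$; any relaxation (fractional assignment, duplication of blocks within the configuration, or $f_j(i) > d_j$) would break the reduction, so care is required in invoking the normalization assumptions stated in Section~\ref{sec:intro}. Once that case analysis is done, the remainder is routine.
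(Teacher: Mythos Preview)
Your core reduction is correct and essentially identical to the paper's: a single configuration with one block of each type $1,\ldots,n$, two jobs each with demand $T=\tfrac{1}{2}\sum_i a_i$ and table $f_j(i)=a_i$, yielding $\opt=1$ in the YES case and $\opt=2$ in the NO case. That gap already establishes the lemma as stated, since any polynomial-time algorithm with ratio strictly below $2$ would distinguish the two cases and hence solve \textsc{Partition}.

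The additional boosting paragraph, however, is both unnecessary and flawed. First, it is unnecessary: ``hard to approximate to within a factor of $2$'' here means exactly that no ratio $<2$ is achievable, which the $1$-versus-$2$ gap already proves; the paper itself notes in its discussion that this kind of instance does \emph{not} rule out an asymptotic PTAS, so no stronger asymptotic claim is intended. Second, the boosting as written does not work. With $M$ separate configurations you have left the ``fixed number of configurations'' regime that the lemma concerns. Moreover, for $M$ \emph{independent} \textsc{Partition} instances the claim ``any copy NO $\Rightarrow \opt\ge 2M$'' is false: if only one copy is NO, one needs $2$ machines of that copy's configuration and $1$ of each other, giving $\opt=M+1$. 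And if you instead merge everything into one combined configuration, a single machine already contains all blocks of every copy, so the YES case collapses back to $\opt=1$ rather than $\opt=M$. Dropping the last paragraph leaves a clean, correct proof matching the paper's.
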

\begin{proof}
We present a reduction from Partition to combinatorial \cms.  Given an instance of Partition with a set $S$ of $n$ elements $0 < a_1 < a_2 < \cdots < a_n$, we construct the following instance. We consider one configuration that contains $n$ blocks all of a different type, labeled $1,...,n$. We have two jobs $j_1, j_2$ both with the same demand table given by $f(i) = a_i$. The demand for each job is $\frac{1}{2}\sum_i a_i$.
\junk{We present a reduction from Partition to \cms\ with one configuration.  Given an instance of Partition with a set $S$ of $n$ elements $0 < a_1 < a_2 < \cdots < a_n$, we construct the following instance of \cms.  We have two jobs $j$ with a demand table given by a step function with $n$ steps: for $0 \le i \le n-1$, $x \in [a_i, a_{i+1})$, $f_j(x) = a_i$ ($a_0 = 0$).  There is exactly one configuration consisting of $n$ blocks $a_i$, $1 \le i \le n$.  The capacity of each machine is $\sum_i a_i$ and the demand for each job is $\sum_i a_i/2$.} 

We claim that the number of machines needed for scheduling the job is one if and only if the Partition instance has a yes answer.  If the Partition instance has a yes answer, then there exists a way to split the $n$ blocks into two parts so that part's value adds up to $k$.  We use one machine, and assign the blocks to each job according to the Partition solution.  The demand table ensures that the demand of the job is satisfied.  If the demand of the two jobs is satisfied by one machines, then the machine serves a total demand of $\sum_i a_i$.  By the demand table, each block satisfies a demand of $a_i$ for some $i$, implying the existence of a two parts of items from $S$, each part's total size adding up to $\sum_i a_i/2$.
\qed


    
\end{proof}

\begin{lemma}
    The number of nonzero variables in $x$ of line~7 is at most $n + |B^*|$.
    \label{lem:nonzero_vars}
\end{lemma}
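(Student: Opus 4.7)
The plan is to apply the standard extreme-point / basic-feasible-solution argument in the style of Lenstra--Shmoys--Tardos. Let $N = n \cdot |B^*|$ be the total number of variables $x_{i,j}$ appearing in $LP_f$ (where $i$ ranges over $B^*$ and $j$ ranges over the $n$ jobs). I will count the ``structural'' (non-nonnegativity) constraints of $LP_f$ and then use the characterization of extreme points.

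First I would list the non-trivial constraints of $LP_f$: there are $|B^*|$ inequalities of type~\eqref{LP':blocks} (one per block type in $B^*$) and $n$ demand inequalities (one per job), for a total of $n + |B^*|$ constraints, together with the $N$ non-negativity constraints $x_{i,j} \ge 0$. Next I would invoke the fact that at any extreme point of a polyhedron in $\mathbb{R}^N$, at least $N$ linearly independent constraints must be tight. Since at most $n + |B^*|$ of these tight constraints can come from the structural inequalities, at least $N - (n + |B^*|)$ of the non-negativity constraints must be tight, i.e., $x_{i,j} = 0$ for at least $N - (n + |B^*|)$ pairs $(i,j)$. Consequently, the number of strictly positive variables is at most $n + |B^*|$, which is exactly the claim.

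There is no real obstacle here; the only subtlety is ensuring that the extreme point has enough linearly independent tight constraints, which is automatic once we observe that an extreme point is by definition the unique solution of its tight-constraint system. No further case analysis on the structure of $f$ or $C^*$ is needed, so this step is purely a dimension-counting argument and can be stated in a few lines.
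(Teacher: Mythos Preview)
Your argument is correct and is exactly the same dimension-counting extreme-point argument the paper uses (in fact stated more carefully): at an extreme point at least $N$ constraints are tight, only $n+|B^*|$ of them are structural, so at least $N-(n+|B^*|)$ nonnegativity constraints are tight.
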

\begin{proof}
    Using extreme point properties we know that the number of tight constraints is at least as many as the number of variables. This leaves only $n+|B^*|$ constraints to not be tight (coming from constraints~\ref{LP':blocks} and \ref{LP:demand}).
    \qed
\end{proof}

\begin{proof} [Proof of Lemma~\ref{lem:pseudo_forest}] 
    This proof follows a similar structure as the proof of Lemma 17.6 in \cite{Vazirani}. We will use a proof by contradiction. First, consider a component in $G$, called $G_c$. Then consider the restriction of the LP, $LP_c$, to only the jobs and block types present in the component. Also let $x_c$ be the restriction of $x$ to those jobs and blocks present in the component. Let $x_{\bar c}$ be the rest of $x$. Note that $x_c$ is a feasible solution to $LP_c$ since all the blocks that satisy demand for jobs in $LP_c$ are connected to the jobs in $G$ and thus are included in $LP_c$, so we continue to satisfy all the demand for these jobs. Now assume for contradiction that $x_c$ is not an extreme point in $LP_c$. Then $\exists x_1, x_2, \lambda$ where $x_1$ and $x_2$ are feasible solutions to $LP_c$ and $\lambda \in (0,1)$ such that we have $x_c = \lambda \cdot x_1 + (1-\lambda)\cdot x_2$. 
    
    Now we show that $x_1 + x_{\bar c}$ and $x_2 + x_{\bar c}$ are feasible solutions to $LP$. First consider that $x_1, x_2$ have disjoint jobs and block types from $x_{\bar c}$. Thus, we can consider the constraints separately. Furthermore, together they cover all the constraints (since they cover all jobs and block types). Thus we need only verify that $x_1,x_2$ satisfy their constraints, and $x_{\bar c}$ satisfies its constraints. Since $x_1, x_2$ are feasible solutions to $LP_c$ we know they satisfy the constraints in $LP$ relevant to them. And since $x_{\bar c}$ is part of the feasible solution $x$, it must also satisfy the contraints relevant to it. Between the two, all the constraints of the $LP$ are satisfied, since together they cover all jobs and blocks.

    But then since $x = \lambda \cdot (x_1 + x_{\bar c}) + (1-\lambda)\cdot (x_2 + x_{\bar c})$ we can say that $x$ is a convex combination of two other solutions. Thus, $x$ is not an extreme point solution. But, since $x$ is an optimal solution to the $LP$, it must also be an extreme point solution. Thus we reach a contradiction.

    Therefore, $x_c$ must be an extreme point solution in $LP_c$. But then, by Lemma~\ref{lem:nonzero_vars} we have that the number of edges in $G_c$ must be at most the number of jobs and blocks in $G_c$. In other words, the number of edges is at most the number of nodes. Therefore, $G_c$ is a pseudo-tree, and $G$ is a pseudo-forest.
    \qed
\end{proof}

\begin{proof}[Proof of Lemma~\ref{lem:pseudo_rounding_poly}]
    The first for loop in the algorithm ranges over $2^{|C|}$ values. The inner for loop ranges over $(\log L)^{|C^*|}$ values. Remember that $L = \sum_j d_j$. But then $L \leq n\cdot \max_j d_j$. Thus the inner loop ranges over $\leq (\log (n\cdot \max_j d_j))^{|C^*|} \leq (\log n + \log \max_j d_j)^{|C|}$ values. Since $d_j$ is specified as a number, it is specified using $\log d_j$ bits. Thus the inner loop runs a number of times polynomial in the input, except for the number of configurations. Lastly we analyze the body of the inner for loop. The size of the LP is polynomial in the size of the input, and thus constructing and solving it takes time polynomial in the size of the input. Constructing the graph takes time polynomial in the size of the LP, as does rounding using the graph. Thus overall the runtime of the algorithm is polynomial in the size of the input, except for it being exponential in the number of configurations.
    \qed
\end{proof}

\section{\cms\ for $O(1)$ configurations of $O(1)$ size}
\label{app:constant_k_constant_size}
{\bf A pseudo-polynomial time algorithm.} 
We present an optimal algorithm, based on dynamic programming, that takes time polynomial in $n$ and the maximum demand.   Recall that $C$ denotes the set of configurations, and $|C|$ is constant.  Let $N$ denote the total number of machines available.  Then, there are $\binom{N + |C| - 1}{|C| - 1}$ different ways of distributing the $N$ machines among these configurations.  Each way yields a specific number of blocks of each type.  For given $n_i$, $1 \le i \le b$, let $S(j, n_1, n_2, \ldots, n_{b})$ be True if the demand of jobs 1 through $j$ can be satisfied using $n_i$ blocks of type $i$th, for each $i$.  Then, we have
\begin{eqnarray*}
S(j, n_1, n_2, \ldots, n_{b}) = \bigvee_{m_i \le n_i, \forall i} & \left(S(j-1, n_1 - m_1, n_2 - m_2, \ldots, n_{b} - m_{b})\right. \\
& \bigwedge \left.T(j, m_1, m_2, \ldots, m_{b})\right),
\end{eqnarray*}
where $T(j,m_1, m_2, \ldots, m_{b})$ is true if and only if the demand of $j$ can be satisfied using $m_i$ blocks of type $i$, for each $i$.  Note that $T(j,m_1, m_2, \ldots, m_{b})$ can be computed easily by inspecting the demand table of job $j$ and its demand $d_j$.  

The algorithm computes $S(j, n_1, n_2, \ldots, n_{b})$ for $1 \le j \le n$, $n_i \le Nk$; the number of different tuples equals $n(Nk)^b$.  The time taken to compute a given $S(j, n_1, n_2, \ldots, n_{b})$, given $S(j-1, n_1 - m_1, n_2 - m_2, \ldots, n_{b} - m_{b})$ for all choices of $m_i$'s, is proportional to the number of different choices of $m_i$'s, which is bounded by $\binom{N + |C| - 1}{|C| - 1}$.  We thus obtain that $S$ can be computed in $n(Nk)^{O(b+|C|)}$.  This computation, coupled with a binary search over possible values of $N$, yields the desired algorithm.  Since $N$ is bounded by $n$ times the maximum demand, we obtain a pseudopolynomial time optimal algorithm if $|C|$ and $b$ are bounded. 

\begin{proof}[Proof of Lemma~\ref{lem:small_job}]
    Consider any placement $P$ that uses $m$ machines. 
 Suppose a small job $j$ is in more than $k/\varepsilon^2$ blocks in $P$.  Since each configuration is of size at most $k$, it follows that the job is placed in at least $1/\lambda^2$ machines.  Since $j$ is small, there exists a configuration $\sigma$ such that $f_j(\sigma) \ge \lambda d_j$.  We remove job $j$ from each machine to which it is assigned in $P$ and place it in $1/\lambda$ additional machines, each with configuration $\sigma$, guaranteeing that the demand of $j$ is satisfied.  Since each machine can hold at most $k$ small jobs, this modification of $P$ results in the increase in the number of machines by a factor of at most $(1 + k\lambda)$, yielding the desired claim.
 \qed
\end{proof}

\begin{proof}[Proof of Lemma~\ref{lem:lpeps}]
    Let $A$ be an optimal placement of the jobs on $m$ machines.  Using Lemma~\ref{lem:small_job}, we first compute a new placement $B$ using at most $m(1 + k \lambda)$ machines in which each small job is placed in at most $1/\lambda^2$ machines.
    
    We now define variable assignments so that the value of \LPeps\ is no more than $(1 + k \lambda)m$. 
    For each large job $j$ and each block of size $p$, set $x_{j,p}$ to be the number of $p$-blocks on which $B$ executes $j$. For each small job type $t$ and each pattern $\pi$, set $z_{t,\pi}$ to be the number of small jobs that are executed in pattern $\pi$ according to $B$.  Note that since each small job is placed in at most $1/\lambda^2$ machines, and hence at most $k/\lambda^2$ blocks, the placement of each small job follows one of the patterns in $W$.     
    Set $y_\sigma$ equal to the number of machines with configuation $\sigma$ according to $A$.
    
    It is easy to see that constraints (\ref{LPeps:execution-large} - \ref{LPeps:z-nonneg}) are satisfied. To see that constraint~\ref{LPeps:servers} is satisfied, observe that each machine used by $B$ either has some block executing a large job (in which case it contributes toward the first term of~\ref{LPeps:servers}) or it has some block executing a small job (in which case it contributes toward the second term). Therefore, the left hand side of~\ref{LPeps:servers} counts the total number of blocks needed to complete all the jobs, while the right hand side computes the total number of blocks supplied by the machines. 
    \qed
\end{proof}

\junk{
\section{\numcms}
\label{sec:numcms_appendix}

\begin{proof}[Proof of Lemma~\ref{lem:numcms_hardness}]
We present a reduction from Unbounded Subset Sum to the numerical version.  Given an instance of Unbounded Subset Sum with a set $S$ of $n$ elements $0 < a_1 < a_2 < \cdots < a_n$ and target sum $k$, we construct the following instance of the numerical version.  We have one job $j$ with a demand table given by a step function with $n$ steps: for $0 \le i \le n-1$, $x \in [a_i, a_{i+1})$, $f_j(x) = a_i$ ($a_0 = 0$).  The capacity of each machine is $k$ and the demand for the job is $2k$.  

We claim that the number of machines needed for scheduling the job is two if and only if the Unbounded Subset Sum instance has a yes answer.  If the Unbounded Subset Sum instance has a yes answer, then there exists a multiset $M$ of items from $S$ whose values add up to $k$; we use two machines, each configured according to $M$, and allocate the job to each of the blocks of size corresponding to the items in $M$.  The demand table ensures that the demand of the job is satisfied.  If the demand of the job is satisfied by two machines, then each machine must serve a demand of $k$.  By the demand table, each block is of size $a_i$ for some $i$, implying the existence of a multiset $M$ of items from $S$ whose values add up to $k$. 
\qed
\end{proof}

\begin{lemma}
    When applied in the numerical setting, the approximation ratio of Algorithm~\ref{alg:greedy} is $\Omega(\log \sum_j d_j)$ and $\Omega(\log k)$ and $\Omega(n)$.
\end{lemma}
\begin{proof}
There are $n$ jobs $j_1, \ldots, j_n$. Let $k = 2^n$. There are two allowed configurations:
\begin{center}
type 1: $\set k$, and type 2: $\{ \underset{2^{n-1}}{\underline{1, \ldots, 1}}, \ \underset{2^{n-2}}{\underline{2, \ldots, 2}}, \ \ldots , \ \underset{2^{n-i}}{\underline{i, \ldots, i}}, \ldots, \underset{1}{n} \}$.     
\end{center}
The demand $d_x$ of job $j_x$ is $2^{n+x+2}$ and the demand table of job $j_x$ is as follows.
\begin{equation*}
    j_x(i) = \begin{cases}
        0 &\text{if}\ i < x \\
        2^{2x} &\text{if}\ x \le i < k \\
        d_x &\text{if}\ i = k
    \end{cases}
\end{equation*}

\textbf{Opt.} Uses four configurations of type 2, placing job $x$ on a block of size $x$ in each machine. This satisfies all jobs since $4 \cdot 2^{2x} \cdot 2^{n-x} = 2^{n + x + 2}$.

\textbf{Alg.} We claim that the $m$th machine of Alg uses a configuration of type 1, placing job $j_{n-m+1}$ in that machine. We prove the claim by induction on $m$. For $m = 1$, the claim follows from the fact that $\sum_{i=1}^{n} 2^{2i} < \sum_{i=0}^{2n} 2^i < 2^{2n + 1} < 2^{2n+2} = d_n$. 
\qed
\end{proof}
}
\end{document}